\newtheoremstyle{indented-definition}{.5\topsep}{.5\topsep}{\addtolength{\leftskip}{2.5em}}{-2.5em}{}{}{ }{}
\theoremstyle{plain} %\theoremstyle{definition}
\newtheorem{definition}{Definition}%[section] 
\theoremstyle{plain}
\newtheorem{theorem}{Theorem}%[section]
\newtheorem{proposition}{Proposition}%[theorem]
\newtheorem{lemma}{Lemma}%[section]
\newtheorem{fact}{Fact}%[section]
\newcommand{\nocontentsline}[3]{}
\newcommand{\notoc}[2]{\bgroup\let\addcontentsline=\nocontentsline#1{#2}\egroup}
\newcommand{\poly}{\operatorname{poly}}
\newcommand{\bigO}{\mathcal{O}}
\newcommand{\polylog}{\operatorname{polylog}}
\newcommand{\gacs}{Gács\xspace}
\DeclareMathOperator{\Maj}{Maj}
\DeclareMathOperator{\Toom}{\mathsf{Toom}}
\DeclareMathOperator{\R}{\mathsf{R}}
\DeclareMathOperator{\Univ}{\mathsf{Univ}}
\DeclareMathOperator{\Concat}{\mathsf{Concat}}
\renewcommand{\S}{\mathsf{S}}
\DeclareMathOperator{\Dam}{Dam} % damage set
\newcommand{\s}{\mathbf{s}\xspace} % state symbol
\newcommand{\x}{x\xspace} % space symbol
\newcommand{\y}{y\xspace} % space symbol
\newcommand{\tEC}{ t_{\mathrm{EC}} }
\newcommand{\tbad}{ t_{\mathrm{bad}} }
\newcommand{\diagram}[1]{ % for physical qudits
    \begin{tikzpicture}[thick, baseline=-0.5ex] %[thick, baseline=-0.5ex]
        \draw(0,0) -- (0.5,0);
        #1
    \end{tikzpicture}
}
\newcommand{\Diagram}[1]{ % for encoded qudits (indicated by double lines)
    \begin{tikzpicture}[thick, baseline=-0.5ex] %[thick, baseline=-0.5ex]
        \draw[double] (0,0) -- (0.5,0);
        #1
    \end{tikzpicture}
}
\newcommand{\TQDiagram}[1]{ % for TWO encoded qudits
    \begin{tikzpicture}[thick, baseline=-0.5ex] %[thick, baseline=-0.5ex]
        % \draw[double] (0.625,0) -- (0.5,0);
        \draw[double] (0,0.625) -- (0.5,0.625);
        #1
    \end{tikzpicture}
}
\newcommand{\Wire}{%
    \pgfgetlastxy{\lastx}{\lasty}%
    \begin{scope}[shift={(\lastx,\lasty)}]
        \draw[double] (0,0) -- (0.5,0);
    \end{scope}%
}
\newcommand{\filter}[1]{%
    \pgfgetlastxy{\lastx}{\lasty}%
    \begin{scope}[shift={(\lastx,\lasty)}]
        \draw (0,-0.5) rectangle (0.25,0.5);
        \node[above right, inner sep=1pt] at (0.25,0.5) {$\scriptstyle #1$};
        \draw[double] (0.25,0) -- (0.75,0);
    \end{scope}%
}
\newcommand{\EC}[1]{%
    \pgfgetlastxy{\lastx}{\lasty}% Capture the end coordinate of the previous draw
    \begin{scope}[shift={(\lastx,\lasty)}]% Shift drawing to that point
        \draw (0,-0.5) rectangle (1,0.5) node[midway] {EC};
        \node[above right, inner sep=1pt] at (1,0.5) {$\scriptstyle #1$};
        \draw[double] (1,0) -- (1.5,0);
    \end{scope}%
}
\newcommand{\Rectangle}[1]{%
    \pgfgetlastxy{\lastx}{\lasty}% Capture the end coordinate of the previous draw
    \begin{scope}[shift={(\lastx,\lasty)}]% Shift drawing to that point
        \draw (0,-0.5) rectangle (1,0.5) node[midway] { };
        \node[above right, inner sep=1pt] at (1,0.5) {$\scriptstyle #1$};
        \draw[double] (1,0) -- (1.5,0);
    \end{scope}%
}
\newcommand{\rectangle}[1]{%
    \pgfgetlastxy{\lastx}{\lasty}% Capture the end coordinate of the previous draw
    \begin{scope}[shift={(\lastx,\lasty)}]% Shift drawing to that point
        \draw (0,-0.5) rectangle (1,0.5) node[midway] { };
        \node[above right, inner sep=1pt] at (1,0.5) {$\scriptstyle #1$};
        \draw (1,0) -- (1.5,0);
    \end{scope}%
}
\newcommand{\gate}[1]{%
    \pgfgetlastxy{\lastx}{\lasty}% Capture the end coordinate of the previous draw
    \begin{scope}[shift={(\lastx,\lasty)}]% Shift drawing to that point
        \draw[rounded corners] (0,-0.5) rectangle (1,0.5) node[midway] {$\mathcal{G}$};
        \node[above right, inner sep=1pt] at (1,0.5) {$\scriptstyle #1$};
        \draw (1,0) -- (1.5,0);
    \end{scope}%
}
\newcommand{\Gate}[1]{%
    \pgfgetlastxy{\lastx}{\lasty}% Capture the end coordinate of the previous draw
    \begin{scope}[shift={(\lastx,\lasty)}]% Shift drawing to that point
        \draw[rounded corners] (0,-0.5) rectangle (1,0.5) node[midway] {$\mathcal{G}$};
        \node[above right, inner sep=1pt] at (1,0.5) {$\scriptstyle #1$};
        \draw[double] (1,0) -- (1.5,0);
    \end{scope}%
}
\newcommand{\TQgate}[1]{%
    \pgfgetlastxy{\lastx}{\lasty}% Capture the end coordinate of the previous draw
    \begin{scope}[shift={(\lastx,\lasty)}]% Shift drawing to that point
        \draw[rounded corners] (0,-1.75) rectangle (1,0.5) node[midway] {$\mathcal{G}$};
        \node[above right, inner sep=1pt] at (1,0.5) {$\scriptstyle #1$};
        \draw (1,0) -- (1.5,0);
    \end{scope}%
}
\newcommand{\TQGate}[1]{%
    \pgfgetlastxy{\lastx}{\lasty}% Capture the end coordinate of the previous draw
    \begin{scope}[shift={(\lastx,\lasty)}]% Shift drawing to that point
        \draw[rounded corners] (0,-1.75) rectangle (1,0.5) node[midway] {$\mathcal{G}$};
        \node[above right, inner sep=1pt] at (1,0.5) {$\scriptstyle #1$};
        \draw[double] (1,0) -- (1.5,0);
    \end{scope}%
}
\newcommand{\decoder}{%
    \pgfgetlastxy{\lastx}{\lasty}% Capture the end coordinate of the previous draw
    \begin{scope}[shift={(\lastx,\lasty)}]% Shift drawing to that point
        \draw (0,-0.5) -- (0, 0.5) -- (0.5, 0) -- cycle;
        \draw (0.5,0) -- (1,0);
    \end{scope}%
}
\newcommand{\unitarydecoder}{%
    \pgfgetlastxy{\lastx}{\lasty}% Capture the end coordinate of the previous draw
    \begin{scope}[shift={(\lastx,\lasty)}]% Shift drawing to that point
        \draw (0,-0.5) -- (0, 0.5) -- (0.5, 0) -- cycle;
        \draw (0.25,-0.25) -- (0.25,-0.5) -- (1,-0.5);
        \draw (0.5,0) -- (1,0);
    \end{scope}%
}
\newcommand{\unitarydecoderinverse}{%
    \pgfgetlastxy{\lastx}{\lasty}% Capture the end coordinate of the previous draw
    \begin{scope}[shift={(\lastx,\lasty)}]% Shift drawing to that point
        \draw (0.5,-0.5) -- (0.5, 0.5) -- (0, 0) -- cycle;
        \draw (0.25,-0.25) -- (0.25,-0.5) -- (0,-0.5);
        \draw[double] (0.5,0) -- (1,0);
    \end{scope}%
}
\newcommand{\preparation}[1]{%
    \pgfgetlastxy{\lastx}{\lasty}% Capture the end coordinate of the previous draw
    \begin{scope}[shift={(\lastx,\lasty)}]% Shift drawing to that point
        \draw (-0.5,-0.5) arc (270:90:0.5) -- cycle;
        \node[above right, inner sep=1pt] at (-0.5, 0.5) {$\scriptstyle #1$};
        % \draw (0.25,0) -- (0.75,0); % note: line is already drawn by \Diagram
        \pgfpathmoveto{\pgfpoint{0.5}{0}}
    \end{scope}%
}
\newcommand{\measurement}[1]{%
    \pgfgetlastxy{\lastx}{\lasty}% Capture the end coordinate of the previous draw
    \begin{scope}[shift={(\lastx,\lasty)}]% Shift drawing to that point
        \draw (0,0.5) arc (90:-90:0.5) -- cycle;
        \node[above right, inner sep=1pt] at (0.25,0.5) {$\scriptstyle #1$};
        % \draw (0.25,0) -- (0.75,0);
    \end{scope}%
}
\begin{document}
\setcounter{secnumdepth}{0} % no section numbering in main text

\title{Quantum Memory and Autonomous Computation in Two Dimensions}

\author{Gesa D\"unnweber}
\email{gesa.duennweber@mpq.mpg.de}
\affiliation{Max Planck Institute of Quantum Optics, Hans Kopfermann Str. 1, 85748 Garching, Germany}
\affiliation{Munich Center for Quantum Science and Technology, Schellingstr. 4, 80799 Munich, Germany}
\affiliation{Technical University of Munich, School of Natural Sciences, Physics Department, 85748 Garching, Germany}

\author{Georgios Styliaris}
\affiliation{Max Planck Institute of Quantum Optics, Hans Kopfermann Str. 1, 85748 Garching, Germany}
\affiliation{Munich Center for Quantum Science and Technology, Schellingstr. 4, 80799 Munich, Germany}

\author{Rahul Trivedi}
\email{rahul.trivedi@mpq.mpg.de}
\affiliation{Max Planck Institute of Quantum Optics, Hans Kopfermann Str. 1, 85748 Garching, Germany}
\affiliation{Munich Center for Quantum Science and Technology, Schellingstr. 4, 80799 Munich, Germany}

% \date{\today}

\begin{abstract}
Standard approaches to quantum error correction (QEC) require active maintenance using measurements and classical processing. Passive QEC, by contrast, has so far been established only in unphysical spatial dimensions. 
Here, we give an explicit scheme for autonomous quantum error correction and computation in two dimensions, formulated as a dissipative quantum cellular automaton with a fixed, local and translation-invariant update rule.
The construction uses hierarchical, self-simulating control elements based on ideas from the seminal classical results of \gacs (1986, 1989)
together with a measurement-free concatenated quantum code.
We prove the existence of a nonzero noise threshold under a local noise model.
Below this threshold, logical errors on encoded initial states are suppressed exponentially with increasing system size and the memory lifetime diverges in the thermodynamic limit.
We also describe an implementation in continuous time as a
time-independent, translation-invariant local Lindbladian using engineered dissipative jump operators.
The recursive nature of our protocol allows for the fault-tolerant execution of quantum circuits specified by the initial state, and thus constitutes a self-correcting quantum computer capable of universal \mbox{computation}.
\end{abstract}

%\keywords{...}

\maketitle

Quantum information is fragile. 
In all currently known hardware architectures, the physical states of a quantum computer are rapidly corrupted by noise.
Quantum error correction (QEC) counteracts this fragility by encoding logical states across many physical qubits and repeatedly detecting and reversing errors. Contemporary fault-tolerant schemes use \textit{actively} controlled correction cycles with syndrome extraction and classical processing leading to a substantial resource overhead~\cite{fowler2012, bravyi2024, google2025}.\linebreak A longstanding question is whether the protection and processing of quantum information can instead be achieved \textit{passively} by embedding error correction directly into the autonomous dynamics of the quantum hardware itself. 
From the viewpoint of condensed matter physics, 
this asks whether a local, translation-invariant quantum system can support robust memory and stable non-equilibrium order without external signals. 

In classical information processing, the analogous problem led to seminal results by Toom and \gacs which showed that reliable storage and computation can emerge from noisy local dynamics without external control~\cite{toom1980, gacs1986, gacs1989}. 
The quantum case is subtler~\cite{brown2016, bombin2013}. 
Passive (\textit{self-correcting}) quantum memories are known in four spatial dimensions (for example the four-dimensional toric code under thermal noise~\cite{dennis2002, alicki2010}), 
but are ruled out in two dimensions by no-go results for the case of local stabilizer Hamiltonian models~\cite{bravyi2009, temme2017}.
In three dimensions, prior works have considered a variety of codes with numerical indications for passive QEC capabilities~\cite{haah2011, lin2024, roberts2025}. 
Analytical approaches have been rather limited. Known results crucially rely either on noiseless classical processing assumed to be available at each site~\cite{harrington2004,lake2025}, or on explicitly hard-coded interactions whose space-time dependence %~(and global synchronization) 
takes the place of an external controller~\cite{balasubramanian2025}.
Thus, whether scalable, truly passive stabilization of quantum information is achievable in low spatial dimensions has remained open.

Here, we answer this question constructively. We present a two-dimensional quantum system with local, translation-invariant and time-independent interactions that autonomously stabilizes encoded quantum information under local noise.
Our result does not contradict the known no-go restrictions~\cite{bravyi2009} because it does not fall under the conventional assumption of a commuting-stabilizer Hamiltonian coupled to thermal noise which is typically modeled, in the Markovian limit, by Davies generators.
Instead, our mechanism uses error-correcting dynamics that are realized by engineered \textit{dissipation} in which fixed local couplings to an environment 
continuously drive the system towards the prescribed code space~\cite{verstraete2009, pastawski2011}. 

\begin{figure*}[tb]
    \centering
    \includegraphics[width=0.95\textwidth]{figure-construction-sketch.pdf}
    \caption{\textbf{Construction of the self-correcting quantum cellular automaton.}
    An arbitrary quantum cellular automaton (\textbf{a}) is simulated by a fixed universal update channel $\Univ$ (\textbf{b}). Encoding with a fault-tolerant quantum error-correcting code produces a constant-size schedule (\textbf{c}) for universal simulation within the one-level encoded space. To recover space-time invariance, each site stores its believed position in local structural variables that are corrected by classical Toom's rule (\textbf{d}). For sufficiently large local dimension, the resulting uniform channel can be taken as the original, simulated QCA, establishing a self-simulating fixed point of the construction. The physical system then robustly simulates its own interaction rule, thereby supporting autonomous stabilization of states encoded in a corresponding hierarchical code.
    \vspace{-1 ex}
    } 
    \label{fig:construction-sketch}
\end{figure*}

Our construction is inspired by \gacs's self-simulating classical cellular automata~\cite{gacs1986, gacs1989}. A cellular automaton is a discrete-time dynamical system specified by a fixed local transition rule that is applied uniformly in space and time. We build a dissipative quantum cellular automaton (QCA)~\cite{schumacher2004, guedes2024, arrighi2019}, in which the local update applies a quantum channel, that corrects itself across a hierarchy of scales.
This hierarchy supports a concatenated quantum code~\cite{aharonov1997, gottesman2000} whose error syndromes are accessed and processed autonomously. Local gadgets implement error correction at the lowest level, while higher levels automatically enact coarse-grained stabilization through the same fixed physical interactions. The resulting code is inherently non-local, with logical degrees of freedom supported on recursive structures, yet the enforcing interactions remain strictly local (and uniform).
We analytically establish robustness by proving the existence of a noise threshold below which the memory lifetime diverges in the thermodynamic limit. 
Moreover, the architecture is expressive enough for quantum circuits to be embedded in the initial state and autonomously executed within the stabilized tower of simulations, yielding self-correcting \textit{computation} as well as memory.

\vspace{-2 ex}
% \newpage
\notoc\section{Main theorem and properties} \label{sec:statement}

The results can be stated as follows:
There exists a two-dimensional quantum system that, on suitably encoded initial states, performs self-correcting, fault-tolerant universal quantum computation.
The interactions are local, translation-invariant and independent of the desired accuracy and the physical system size.

The dynamics can be realized
by a dissipative quantum cellular automaton with a fixed local update rule and the protection holds under noise that produces local errors after each time-step. We prove existence of a nonzero noise threshold below which the induced logical noise can be made arbitrarily small by increasing the system size. In particular, the memory lifetime diverges in the thermodynamic limit.

For computation,
any desired translation-invariant quantum circuit of depth $D $ on $N$ qubits (assumed to be given over a fixed universal gate/channel set) can be compiled into the initial state and executed fault-tolerantly with accuracy $\delta$ using $\bigO(N \polylog( N D/ \delta))$ physical qudits and total physical runtime $\bigO(D \polylog( N D/ \delta))$ under the same circuit-independent interactions.
The required encoded initial state can be prepared fault-tolerantly with a circuit of depth $\bigO(\polylog(ND/\delta))$ and the final logical output is recovered by decoding the highest level of the hierarchy.
To simulate a spatially varying circuit, one first maps it into a translation-invariant circuit with a translationally varying initial state and then simulates this under the stated polylogarithmic overhead.

We now provide the key ideas of our scheme. The formal definition of the local rule and the threshold proof can be found in the Supplementary Information.

\vspace{4 ex}
\notoc\section{Hierarchical construction}

The construction can be pictured as a recursively layered hierarchical mechanism, analogous to the concatenated codes used in active QEC~\cite{aharonov1997, gottesman2000}. The core difficulty is to not only correct the single-level quantum data but also the space-time organization of the fault-tolerant protocol itself. This organization is repaired by fixed local dynamics that simultaneously protect (and process) the encoded data. 
The core idea is to build a quantum system that performs \textit{self-simulation} (alike \gacs's classical constructions~\cite{gacs1986, gacs1989}), meaning that one may adopt a coarse-grained view in which blocks of qudits encode higher-level qudits that in turn evolve under simulated interactions mirroring the original, physical interactions. 
Once such self-simulation is robustly implemented, the concatenation hierarchy autonomously builds itself up to the largest scale permitted by the finite-sized system. 

We define the transition rule of our self-correcting QCA step-by-step by going through one loop of self-simulation (Fig.~\ref{fig:construction-sketch}).
Given an arbitrary quantum cellular automaton, we construct a noise-resistant simulation of it, thereby obtaining a new QCA. We then verify that the self-simulation can be closed, i.e. that one may build a system where equating the original rule with the final simulating interactions is well-defined.

\vspace{4 ex}
\notoc\subsection{Programming a fixed local rule}

We begin with uncorrected simulation. Following the strategy laid out in Ref.~\cite{arrighi2008}, we build a \textit{universal} quantum cellular automaton, $\Univ$, which is \textit{classically programmable}, i.e. which can simulate any prescribed local dynamics with a corresponding product-state program given via the initial state and for which the program always remains unentangled with the quantum data.
The transition rule of $\Univ$ is fixed, once and for all, so only the initial program tape depends on the target automaton. A block of physical cells represents one simulated, `higher-level' cell on which the desired evolution is enacted via a sequence (macro-period) of lower-level steps. In our setting, we start with a given QCA on a potentially large local dimension $d$. The complexity of this QCA is shifted entirely into the spatial and temporal overhead of the simulation, using physical qudits with constant local dimension fixed by $\Univ$. 
At this stage the construction is only a simulator and errors in the simulated data are simply inherited by the next scale.

\vspace{4 ex}
\notoc\subsection{Embedding fault-tolerant correction}

To make the simulation robust, we replace each update of $\Univ$ by a \textit{concatenated error-correction} gadget. It is well-known that there exist quantum error correcting codes, coming with a fault-tolerant universal set of gates, which can be recursively iterated (concatenated) to obtain better and better correction capabilities~\cite{aharonov1997}. We use a measurement-free, nearest-neighbor-local scheme~\cite{aharonov1997, gottesman2000} and, by compiling the fixed channel $\Univ$ with it, construct a space-time dependent schedule that implements fault-tolerant universal interactions in the encoded space (incurring only a constant additional space-time overhead).
However, this construction could still be spatially varying over the code-blocks and, since we will concatenate the code to reduce the effective logical noise, this would result in a translationally varying scheme.
To recover space-time invariance, we introduce local \textit{structure variables} $(\tau,x,y)$ that record the locally believed space-time position at each site. This yields a local channel which performs individual steps of the encoded $\Univ$ schedule according to the structural state. When the structure variables are correct (modulo the fixed single-level block size and cycle period), this exactly reproduces the intended encoded update on the data registers. As before, the logical circuit to be executed is stored only in the initial state (now in the form of an encoded program band), whereas the local interaction itself remains invariant.

\vspace{4 ex}
\notoc\subsection{Protecting the control layer}

\begin{figure}[tb]
    \centering
    \includegraphics[width=0.95\linewidth]{figure-layer-sketch.pdf}
    \caption{\textbf{Analytical noise reduction.}
    \textbf{a.} Ideal evolution under classical Toom's rule deterministically heals damage contained within any finite triangular region.
    \textbf{b.} Each level of the construction corrects local errors via Toom's rule (structure layer) and a concatenated quantum error-correcting code (data layer) while simultaneously performing a simulation of the quantum cellular automaton acting on the level above.
    The states of simulated higher-level cells are spread out over blocks of qudits on the level below and appear faulty whenever the number of errors within one such block can no longer be corrected by the code.
    We prove that the noise model is effectively renormalized at each level of self-simulation so that the logical noise of the top level is suppressed with increasing system size.
    \vspace{-3 ex}
    } 
    \label{fig:layer-sketch}
\end{figure}

The structure and program variables form a control layer that coordinates the correction schedule. But these control registers, too, are noisy; so the controller must also be error-corrected.
Importantly, within any given level of simulation, the control layer remains essentially classical, meaning that the registers can be dephased into the computational basis without disturbing the encoded quantum data. We may therefore protect them with a \textit{Toom-type rule}~\cite{toom1980} that is geometrically local in two dimensions. At every step, each site's structure variables update from themselves together with the local north and east neighbors by a majority-vote that includes modular shifts to compensate for the ideal gradients in space and time. Finite islands of errors are hence eroded from their north-east boundary inwards (Fig.~\ref{fig:layer-sketch}.a). 
In parallel, an analogous block-wise majority update restores the product-state program symbols by copying each within-block symbol to temporary registers, shifting across neighboring blocks and refreshing from the corresponding north and east copies. 
Thereby the space-time pattern that determines the simulator's actions is itself maintained by local, translation-invariant dynamics. This method admits a geometric analysis~\cite{gacs2021, berman1988} by containing the erroneous sites inside north-east-triangular `damaged regions' that shrink deterministically in every fault-free step.

\vspace{4 ex}
\notoc\subsection{Closing the self-simulation loop}

Thus far we have constructed a fault-tolerant universal simulator for an arbitrary QCA. 
The final step is to set the simulated interactions equal to the constructed simulator itself (Fig.~\ref{fig:construction-sketch}). Then each block encodes a higher-level cell governed by the very same local dynamics, so the hierarchy required for concatenated error correction is regenerated autonomously. 
A priori, the self-consistency of this is non-trivial, since the simulated rule determines the required block size and cycle time and these in turn impact the local dimension and rule complexity of the simulator storing space-time coordinates, program symbols and workspace symbols at each site, thus introducing some self-consistency constraints.
The point of our construction is that these dependencies are only polylogarithmic. 
Writing $M$ for the linear block size and $T$ for the time-duration of one macro-step, self-simulation reduces to consistency conditions of the form
$$ M \geq \bigO(\polylog(M,T)), \qquad T \geq \bigO(\poly (M) ). $$
In the Supplementary Information we show that these inequalities close at finite constant values of $M$ and $T$. The corresponding QCA is a fixed point of the coarse-graining map and self-corrects upon initialization in an appropriately encoded state.

\vspace{4 ex}
\notoc\section{Threshold proof}

The threshold proof combines the ideas of concatenated fault-tolerance from (spacetime-varying) quantum computation~\cite{aliferis2006} with an analysis of the Toom-type rule protecting the control layer.
One macro-period on an $M\times M \times T$ space-time block is viewed as a single higher-level location. If faults inside that region are few, the block implements the intended operation; otherwise it is declared `bad' and counted as an effective simulated fault. Defects in the control variables are confined to sparse damaged regions whose triangular envelope shrinks under Toom-correction whenever no new faults occur~\cite{gacs1989}. As a result, the induced effective noise weakens under coarse-graining from one level to the next (Fig.~\ref{fig:layer-sketch}.b). Iterating this renormalization yields a nonzero threshold below which the logical noise strength decreases with the system's scale and the memory lifetime diverges in the thermodynamic limit.

\vspace{4 ex}
\notoc\section{Continuous-Time Implementation}

The self-correcting system can also be realized by evolution in continuous time. In this version of the scheme, the updates are implemented by a \textit{translation-invariant, time-independent local Lindbladian} with engineered dissipative jump operators. (Equivalently, the same dynamics can be generated by a time-independent local Hamiltonian acting on the system and local  ancillas, together with dissipative resets of the ancillas~\cite{kashyap2025}.)

To remove the global clock, we replace the synchronized updates by local dissipative jumps that advance the simulation subject to an asynchronous `marching-soldier' constraint~\cite{wang1991, gacs1986, nehaniv2004} which prohibits neighboring cells' update schedules from drifting by more than one sub-step. At each level, this restriction is enforced only within finite synchronization neighborhoods
that simulate separate higher-level operations. 
The correction jumps preserve causality by storing a local buffer of the classical control registers, so each structural update reads concurrent neighbor states. The quantum data operations are written as a brickwork-geometry pattern of disjoint local channels, invariant under reordering up to $\pm 1$ local asynchrony without requiring extra data copies. 

This asynchronous protocol can be controlled by local, finite-valued clocks using only constant additional onsite dimension. 
Faults in the clock registers are locally detectable and can be repaired so that, outside damaged regions, each synchronization neighborhood executes the correct higher-level update (with at most constant slowdown compared to the synchronized scheme). Under the local unitary jump-noise model specified in the Supplementary Information, trajectories of the corresponding Lindbladian reduce to the asynchronous discrete-time automaton with a renormalized local fault strength. Applying the threshold arguments to this setting then implies that the logical errors are suppressed with increasing system size.

\vspace{4 ex}
\notoc\section{Discussion}

We have shown that autonomous fault-tolerant quantum computation is possible in two dimensions with fixed, local and translation-invariant dynamics. 
In that sense,
our construction gives the autonomous counterpart of the concatenated-code threshold theorems for actively controlled architectures:
Scalable fault-tolerance does not fundamentally require measurements, classical feedback, spatial inhomogeneity or externally imposed timing.
More broadly, our findings indicate new physical phenomena that can appear in non-equilibrium many-body systems. We have shown that structured quantum information processing can itself become a stable dynamical property of a two-dimensional system. 
The special case where the encoded computation is a plain counter yields long-lived temporal order that may be termed a stable dissipative quantum time crystal~\cite{yao2020}.

The
hierarchical self-simulation framework is more modular than the described implementation might suggest. The proof uses only that the data layer admits a nearest-neighbor, measurement-free concatenated gadget set with constant ancilla overhead and sufficiently large fixed distance. The same organization can therefore protect a broad class of concatenated encoding schemes. Because the recursive Toom correction reproduces the required hierarchy along each row and column of the two-dimensional lattice, this can be used to protect either a single, genuinely two-dimensional code or many one-dimensional codes in parallel; and universal nearest-neighbor gates can be applied between the separate logical systems of adjacent rows in the latter case.
Beyond concatenated codes, the same architecture could also be combined with other quantum code families. In particular, one might use the self-correcting structure layer to eliminate explicit spatial dependence and global synchronization requirements in local-automaton decoders for topological codes~\cite{balasubramanian2025}, thereby enabling fully translation-invariant correction of the toric code in two dimensions.

We finish by discussing the limitations of our work and pointing to related open directions.
First, it would be very interesting to see whether one can obtain a similar scheme even in one dimension. Upon embedding a quantum error-correcting code within the structure of a state stored in \gacs's one-dimensional automaton \cite{gacs1986}, the proof should, in principle, resemble the two-dimensional case, but the construction would be considerably more complicated since the hierarchical structure cannot be error-corrected so easily from comparing with a local environment. 
Success would yield a counterexample to dissipative variants of Ref.~\cite{bergamaschi2025a}.
In regard to classical computation, one may use a classical repetition code in the data layer of our scheme to obtain a simplified version of \gacs's two-dimensional automaton. Carrying our ideas over to one dimension might similarly result in a streamlined proof of the classical one-dimensional case.

Second, in the presence of noise, remaining logical faults in our construction on any finite system size eventually mix between the encoded sectors, so the steady state is completely mixed in the logical degrees of freedom. 
Intuitively, one might consider a history-state embedding~\cite{verstraete2009} that applies the correction jumps both forwards and backwards in time at the highest logical level (with weak bias towards the initial state) and which thus produces a unique attractive fixed point whose local statistics can be post-selected to extract the outcome of an encoded computation. 
If that picture can be made rigorous, it would yield a Lindbladian for which logical expectation values on the fixed point are stable (in the sense of Ref.~\cite{trivedi2024}) and would thereby imply BQP-completeness for the class of stable, size-independent Lindbladian fixed point problems.

In the memory setting, the perturbed continuous-time protocol produces a local, translation-invariant Lindbladian whose convergence to equilibrium is necessarily slow, because the encoded memory lifetime grows with the system size. 
Our findings establish robustness under sufficiently weak local unitary noise jumps. We conjecture that this extends to arbitrary low-weight local perturbations, so that the system provides a concrete example pertaining to the question of whether slow-mixing Lindbladians can remain slow-mixing under perturbations that would, in isolation, generate rapidly mixing dynamics.

Moreover, our results also hold implications for the computational complexity of tensor network states. 
Local expectation values of two-dimensional injective projected entangled-pair states (PEPS) are known to be post-BQP-complete for sufficiently small injectivity parameter~\cite{schuch2007, malz2025, harley2025}. Because our construction embeds arbitrary fault-tolerant computation into a translation-invariant local rule, it enables an extension of these hardness results to translation-invariant PEPS in three dimensions, at least when a translationally varying boundary is used to project onto the required initial state. A next step would be to remove even that boundary inhomogeneity, for example by using a long history-state sequence (in the spirit of Ref.~\cite{gottesman2009}) that counts up to the desired non-uniform input state before post-selection.

Two further open challenges are self-organization and rate. First, can the same fixed interactions also fault-tolerantly prepare the hierarchical initial state? If so, one would need to supply only the circuit description or stored state in translation-invariant product form, for example repeated locally across the lattice, and the dynamics would autonomously grow the full hierarchical pattern. For this, the main obstacle is ensuring that this growth proceeds on a slower time-scale than the error correction inside already undamaged regions, so that the act of building the hierarchy does not itself introduce and spread too many errors.
Second, an extension to constant encoding rate may be available by replacing the fixed base code with a tower of Hamming-type codes whose block size increases with level, following the ideas of Ref.~\cite{gidney2025}. The recursive structure should provide enough local information about a cell's position within the hierarchy to select the appropriate level-dependent correction operations.

Finally, natural next steps are to generalize the noise model and to optimize the efficiency of the scheme.
For the continuous-time construction, the Supplemental Material provides an outline of a threshold proof by adapting the formalism used in the synchronous case. We leave a detailed proof for future work. 
Extending the analysis to more general noise, in particular to non-unitary perturbations on the Lindbladian (e.g. via a fault-path expansion directly in continuous time) or non-local thermal noise, would substantially strengthen the result.
Likewise, obtaining explicit threshold estimates, reducing the local dimension and simplifying the control mechanism are desirable steps towards practical realizability.

\vspace{2 ex}
\textit{Note added:} \quad After the first version of this paper appeared on the arXiv, Balasubramanian, Davydova and Lin posted a preprint that independently constructs a passive quantum memory in three dimensions~\cite{balasubramanian2026}. Their work studies an equilibrium Hamiltonian setting, giving a geometrically local, spatially non-uniform stabilizer Hamiltonian with exponential memory time under weak-coupling Markovian thermalization at sufficiently low temperature. This is complementary to the present construction which uses two-dimensional, translation-invariant non-equilibrium dynamics and extends further to fault-tolerant computation. An interesting open question is whether the Hamiltonian approach can be made robust beyond the assumption of Markovian thermal noise, or extended to autonomous computation.

\vspace{4 ex}
\notoc\section{Acknowledgments}

We are grateful to S. Balasubramanian, J.I. Cirac, M. Davydova and E. Lake for discussions and feedback. 
Financial support:
This research is part of the Munich Quantum Valley, which is supported by the Bavarian State Government with funds from the Hightech Agenda Bayern Plus. 
R.T. acknowledges funding from the European Union's Horizon Europe research and innovation program under grant agreement number 101221560 (ToNQS).
G.S. acknowledges funding from the Deutsche Forschungsgemeinschaft (DFG, German Research Foundation) under Germany’s Excellence Strategy EXC2111–390814868.

\vspace{4 ex}
\notoc\section{Author contributions}
G.D. conceived the self-correction scheme. G.D., G.S., R.T. refined the construction. G.D. and R.T. developed the formal analysis. R.T. supervised the project. G.D. wrote the manuscript.

\vspace{4 ex}
\notoc\section{Competing interests}
The authors declare no competing interests.

% ------------- BIBLIOGRAPHY:

\FloatBarrier % typeset all remaining floats
% \newpage

% \nocite{*} %List all bib-entries
% \addcontentsline{toc}{chapter}{Bibliography}

% % for the bibtex bibliography system:
% \bibliographystyle{apsrev4-2}
% \bibliography{bibliography}

\let\oldaddcontentsline\addcontentsline
\renewcommand{\addcontentsline}[3]{}
\bibliography{bibliography}
\let\addcontentsline\oldaddcontentsline

% % for the biblatex bibliography system:
% \begingroup
%     \emergencystretch=1em   % allow more empty spaces to not run into margins
%     % \printbibliography % if using biblatex
% \endgroup

% \onecolumngrid % single column from here on
% % \vspace{8 ex}
% \newpage
% % \twocolumngrid
% \appendix

% \begin{appendices}

\let\oldaddcontentsline\addcontentsline
\renewcommand{\addcontentsline}[3]{}
\let\addcontentsline\oldaddcontentsline

\clearpage
% \appendix

\onecolumngrid
\setcounter{secnumdepth}{3} % re-enable section numbering
\begin{center}
{\large \textbf{Supplementary information for} \par \textbf{"Quantum Memory and Autonomous Computation in Two Dimensions"} }
\end{center}
\vspace{2 ex}
\twocolumngrid
\tableofcontents

\vspace{4 ex}

This supplemental material functions as a self-contained discussion of our self-correction scheme and provides a rigorous proof of the threshold theorem.

Sec.~\ref{sec:2D-definition} defines the synchronous self-correcting quantum cellular automaton. Sec.~\ref{sec:conds-selfcorrecting-QCA} states an abstract criterion for self-correction, and Sec.~\ref{sec:exRec-review} reviews the extended-rectangle formalism used to verify it. Sec.~\ref{sec:proof-of-FT-conds} proves that the present construction satisfies that criterion. Sec.~\ref{sec:computation-setup} explains initialization and readout in this setting. Finally, Sec.~\ref{sec:async-scheme} translates the same scheme into continuous time.

\section{Overview} \label{sec:overview}

We start with an overview of the main intuitions and relevant ingredients.
Firstly, we note that there exists a \textit{classical} transition function which performs self-correction when applied globally in discrete time-steps on a two-dimensional lattice. 
This is given by the (remarkably simple) rule~\cite{toom1980}:
\begin{definition}[Toom's rule]
    A classical trajectory $\s(t,i,j)$ on the periodic square lattice $\Lambda = \mathbb{Z}_n^2$ follows \textit{Toom's rule} if and only if
    \begin{align*}
        \s(t+1,i,j) =& \Toom(\s(t,\cdot),i,j)  \\
        :=& \Maj(\s(t,i,j) , \s(t,i+1,j) , \s(t,i,j+1))
    \end{align*}
    at all time steps $t \in \mathbb{N}_0$ and all sites $(i,j)\in \Lambda$. The function $\Maj(a,b,c)$ represents the majority vote: if at least two of its arguments coincide, then it returns their shared value, otherwise it returns the first argument. At each step, Toom's rule thus computes the majority among the north-east neighbors and the central site.
    
    When we speak of applying Toom's rule to a quantum register, we mean the channel obtained by first dephasing into the computational basis and then applying a classical Toom-update to the resulting diagonal state.
\end{definition}
Toom's rule stores a single bit encoded in the classical repetition code for an extended amount of time. 
The intuition is that, in two or higher dimensions, taking the majority vote of a local neighborhood erodes islands of errors from their corners inwards. 
Moreover, even in one dimension it is possible to reliably store information within \gacs's classical cellular automaton~\cite{gacs1986, gray2001}. This requires a complex update rule which ensures that the entire state is hierarchically structured and that regions of errors within it shrink whenever their internal structure has a smaller scale than that of the surrounding state.
For the sake of simplicity, we proceed with the two-dimensional case.

As a second ingredient to our scheme, we recall existing \textit{active} quantum error correction schemes with concatenated codes. 
These have been used to prove the threshold theorem in Refs.~\cite{aharonov1997, gottesman2000}. We formulate their result as:
\begin{lemma}[proof in the extended version of \cite{aharonov1997}] \label{lem:concat-code}
    For any fixed $\tEC \geq 1$, 
    there exists a constant-size quantum error-correcting code $\mathcal{C}$ and a complete, nearest-neighbor gadget set (state preparation, a universal gate set, error correction, and measurement) that is $\tEC$-fault-tolerant (in the standard concatenated-code sense, cf. Def.~\ref{def:FT-conditions} below). 
    These gadgets define a constructive map $\mathcal{E}$ that transforms an ideal logical circuit $Q$ into a one-level encoded circuit $\mathcal{E}(Q)$. 
    The level-$l$ concatenated circuit $FT_l(Q) = \mathcal{E}^l(Q)$ retains 2D nearest-neighbor locality and requires no intermediate measurements or classical control.
\end{lemma}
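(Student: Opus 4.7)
The plan is to instantiate $\mathcal{C}$ as a small distance-three CSS code of constant block size (e.g.\ the $[[7,1,3]]$ Steane code) and to build a gadget library whose elements are simultaneously (a) measurement-free, (b) nearest-neighbor on a one-dimensional qubit wire, and (c) fault-tolerant in the sense of Def.~\ref{def:FT-conditions}. The parameter $\tEC$ is absorbed by padding every gap between logical operations with $\tEC$ consecutive error-correction gadgets on each block; I only need to verify that the fault-tolerance constants remain finite for every fixed $\tEC$.

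The first step is to fix the gadget set. Transversal implementations of the Clifford generators ($H$, $S$, CNOT) on the chosen code furnish logical Cliffords directly, and the missing non-Clifford gate (logical $T$) is supplied via offline preparation and coherent injection of a magic state, distilled entirely from Clifford gadgets and fresh ancillae. Error correction is realized without measurement by preparing an ancilla block in a suitable syndrome-collecting state, entangling it transversally with the data, applying the recovery unitary with the ancilla register as \emph{quantum} control (as in the measurement-free syndrome-extraction recipes), and then resetting the ancilla into the environment. State preparation and logical readout gadgets are constructed analogously, with any terminal decoder either discarded or implemented coherently.

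To enforce 1D nearest-neighbor locality I would fix a canonical linear order of the physical qubits in each block and compile every transversal operation and every ancilla--data interaction into a nearest-neighbor SWAP network. Because the block is constant-size, each such network has constant depth, and because SWAPs only permute qubits (and hence errors) without raising their weight, the single-fault/single-error property required by Def.~\ref{def:FT-conditions} is preserved by the compilation. Verifying the fault-tolerance conditions then reduces to a finite check on the level-1 gadgets, after which the concatenation map $\mathcal{E}$---which replaces each physical qubit of a gadget with an encoded block and each elementary gate with its level-1 encoded gadget---preserves 1D locality, absence of intermediate measurements, and fault tolerance at every recursion step, so that $FT_r(Q)=\mathcal{E}^r(Q)$ inherits all three properties.

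The main obstacle is the non-Clifford gadget: removing in-circuit measurement forces magic-state preparation to proceed by coherent reset and unitary conditioning rather than projection and classical post-selection, which can amplify the effective noise per gadget and inflate the level-1 failure probability. Proving that such a measurement-free distillation circuit can still be compiled into a 1D layout while satisfying the fault-tolerance bounds with a finite, $\tEC$-dependent threshold is the technically demanding step, and is precisely what the extended version of Ref.~\cite{aharonov1997} establishes.
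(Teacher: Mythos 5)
Your overall architecture --- constant-size CSS base code, transversal Cliffords plus magic-state injection for universality, measurement-free error correction via coherently controlled recovery on ancillas that are then reset, and SWAP-network compilation to a 1D nearest-neighbor layout whose constant depth preserves the single-fault accounting --- matches what the paper sketches before deferring to the extended version of Ref.~\cite{aharonov1997} (the paper likewise enforces locality at the first encoding level and then concatenates, and counts the $\bigO(1)$ ancillas per block into the constant block size).

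There is, however, one genuine gap: your treatment of the parameter $\tEC$. The lemma requires, for any fixed $\tEC \geq 1$, a code and gadgets satisfying Def.~\ref{def:FT-conditions} \emph{with that} $\tEC$, i.e.\ the filter $\diagram{\filter{r}}$ must project onto states correctable given up to $r \le \tEC$ errors per block, and the propagation conditions must hold for $r+s \le \tEC$. This is a property of the code's distance ($d \ge 2\tEC+1$), not of how often error correction is applied: a $[[7,1,3]]$ block carrying two errors is simply not correctable, and padding the circuit with $\tEC$ consecutive EC gadgets does nothing to change that --- each EC round still fails on the same uncorrectable syndrome, and the extra rounds only add fault locations. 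The correct move is to choose, for the given constant $\tEC$, a base code of distance $2\tEC+1$ (e.g.\ a larger CSS code, or the distance-3 code concatenated a constant number of times and then treated as a single constant-size block), for which the gadget constructions you describe go through unchanged. This matters for the paper, which later instantiates the lemma with $\tcode = C\,\tEC^S + \tEC$, a value well above $1$. With that correction, the rest of your argument (finite verification of the level-1 gadgets, then concatenation preserving locality, measurement-freeness and fault tolerance) is sound and consistent with the cited proof.
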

The proof of the error-correcting capabilities of such codes can be performed by tracking the rescaling of the logical error rate through each level with the method of extended rectangles
\cite{aliferis2006} (that we review in Sec.~\ref{sec:exRec-review}).
We use a 2D nearest-neighbor version of the statement, implemented using a sequence of SWAP gates to realize the recursive gadgets locally. 
The extended version of Ref.~\cite{aharonov1997} supplies exactly the required properties. In particular, nearest-neighbor locality is enforced already at the first concatenation level, before concatenating afterwards, so that the overhead relative to non-local schemes is increased by only a constant factor.
The gadgets are realized in a measurement-free way via stabilizer applications to local ancillas that act as control qubits in subsequent operations. 
Each gadget uses finitely many ancilla qubits per encoded block; hence the ancillas can be counted as part of the constant block size of the base code.
Importantly, although such schemes require no intermediate classical processing, it is still necessary to store and control the structure of the protocol. 
To obtain a truly \textit{passive} quantum error correction method, we supply the concatenated scheme with a clock mechanism that achieves this control in a time-independent manner whilst preserving the stability to errors. 

The main contribution of our construction hence lies in finding a way to self-correct the required space-time information that is usually hard-coded in concatenated error-correcting schemes. 
We achieve this by enforcing recursively layered Toom-like correction in corresponding structural layers across many scales. 
Essentially, what we build is a self-correcting automaton that is capable of universal computation, similar to the classical automaton of Ref.~\cite{gacs1989} but combining the universal control at each level with a concatenated quantum error-correcting code for an overall self-correcting quantum system.

To make this concrete, we utilize a universal quantum cellular automaton that replaces the temporal control in quantum computations with a fixed, local transition rule. It is known that there exist QCAs which achieve universality by shifting the computational complexity entirely to the initial state~\cite{schumacher2004, vollbrecht2006, shepherd2006, arrighi2008, vollbrecht2008, kohler2022}. 
Let $\Univ$ denote the one-step update channel of such a universal automaton (made more explicit in Sec.~\ref{sec:2D-definition}).
Universality means that for any finite quantum circuit (or local transition rule) $\R$ there exists a product-state `program' such that repeated application of $\Univ$ on data registers initialized to hold this program implements $\R$ (up to a fixed encoding).
For example, there exists an initial state that represents the program with which $\Univ$ implements Toom's rule on a designated register.

However, Toom's rule alone has limited error-correcting properties that allow for the storage of merely a single bit. The recursive application on several levels improves on this but brings the additional challenge that this hierarchical updating must itself emerge from a translationally invariant local rule. To resolve this, note that a system which is capable of universal computation is thereby also capable of universal simulation. The simulation program can be engineered such that the system effectively simulates \textit{the execution of its own update rule} on the states encoded in the level above.
It takes some care to achieve this self-simulation without disturbing lower levels and while still using only a finite state-space, but, once the self-simulation is robustly implemented, the error-correcting properties amplify themselves up to the highest level that fits in the system. 
The hierarchical pattern of the states stored by this self-simulation is exactly the same as the structure needed in concatenated quantum error correction schemes.

Altogether, we construct a fixed set of interactions that simultaneously performs two tasks: Firstly, these interactions correct a constant number of errors locally within a quantum error-correcting code chosen according to Lem.~\ref{lem:concat-code} and within a classical error-correcting code that is a variant of Toom's rule. These codes are located in separate subspaces and we consider the full system to obey a combined code. And secondly, the interactions enforce a simulation of an interacting quantum system whose qudits are represented by blocks of physical qudits. The interactions within these systems are applied on a slower scale than the physical interactions but, crucially, they are exactly the same as those of the system itself, i.e. the system exhibits \textit{self-simulation}.
Together, these properties ensure that appropriate states can be stored in the system for an extensive amount of time. One may include additional program instructions to perform gates of a chosen circuit and carry those instructions through all levels of self-simulation. This way, the largest block size in the system undergoes an evolution that fault-tolerantly simulates a given circuit in the highest level of encoding.

\section{Discrete time: Self-correcting quantum cellular automaton} \label{sec:discrete-scheme}

We now lay out the details of the error-correcting procedure. 
Let us start with the realization in discrete-time, i.e. with constructing a self-correcting quantum cellular automaton. We consider operations that are applied throughout the system at integer times and allow errors to occur in between these time-steps.
The term \textit{cells} refers to the local, finite-dimensional state placed at individual sites on a lattice. 

Our notation as well as our proofs are heavily inspired by \gacs's classical construction in Ref.~\cite{gacs1989}. Readers familiar with that reference should note that we differ in the representation of control information. Aiming for ease of understanding, instead of distributing a site's structural state over a `colony of sites', we will store the full local structure variable at each cell. The increased overhead from self-simulating the resulting large but constant onsite state-space will contribute to the spatial extent of the simulating blocks.

\subsection{Precise definition of the transition rule} \label{sec:2D-definition}

The local states in our scheme carry two kinds of information that are stored in separate registers: a \textit{structural state} (giving the location of the site on the lattice and the current time)
and a \textit{data state} 
(carrying the quantum information represented at this site, including program information and workspace/ancilla registers), see Fig.~\ref{fig:state-space}.
Both of these should be considered within a single level of the hierarchical organization, so for example the location is given modulo the block size within a single level. Note that the current-level structure registers are auxiliary classical control variables only from that level's perspective; the simulated higher-level cells (including their structural registers) are placed entirely in the current-level's data layer.

\begin{figure}[tb]
    \centering
    \includegraphics[width=0.475\textwidth]{figure-local-state-space.pdf}
    \caption{\textbf{Local state-space of the automaton.}
    The quantum cellular automaton acts separately on local data and structure degrees of freedom which store the quantum information of the universal simulation and the believed space-time coordinates, respectively. 
    } 
    \label{fig:state-space}
\end{figure}

We will later fix the block size $M$ and period $T_0$ to absolute constants, independent of the desired circuit, target accuracy and total lattice size. The simulator will execute one step of the simulated QCA over several repetitions of a fixed $T_0$-step cycle.
For now, we label these as parameters
\begin{align*}
    M &= \text{linear first-level block size}, \\
    T_0 &= \text{period of one first-level simulator cycle}.
\end{align*}
The number of physical QCA steps required to simulate one step of the encoded higher-level QCA is denoted as $T := T_0 T_\mathrm{sim}$.

Let $\s$ be a trajectory of the system on the lattice $\Lambda = \mathbb{Z}_n^2$. 
In the following, $(t,i,j)$ refers to the \textit{actual} space-time coordinate, so $t \in \mathbb{Z}$ is the number of updates that have been performed and $(i,j) \in \mathbb{Z}_n^2$ is the real-space lattice site.
The structure state at each site represents the locally \textit{believed} space-time location. For a trajectory $\s$, we denote this as
\begin{align*}
    \tau(t,i,j) ={}& \text{time-stamp stored in $\s(t,i,j)$}, \\
    (\x(t,i,j),\y(t,i,j)) ={} & \text{location stored in $\s(t,i,j)$}, 
\end{align*}
taking values $\tau \in \mathbb{Z}_{T_0}$, $x,y \in \mathbb{Z}_{M}$. 
The role of $\tau$ is only to select the next operation within the fixed $T_0$-step cycle, it does not label the iteration count of that cycle.
Within one trajectory, the structural information in $\tau,x,y$ 
is essentially classical in the sense that it can be measured at each step of the automaton without compromising the quantum information stored in the data state.

We redefine Toom's rule so that it corrects periodic, instead of just constant, structural information:
\begin{definition}[Toom's rule for structural information] \label{def:Toom-structural}
Trajectories $\s(t,i,j)$ of structural states $\s = (\tau, x, y)$ follow Toom's rule $(\tau,x,y) \mapsto \Toom((\tau,x,y))$ iff
\begin{align*}
    \tau(t+1, i,j) = \Maj\big(\tau(t,i,j),\; &\tau(t,i+1,j),\;  \\
        &\;\tau(t,i,j+1)\big) + 1 \bmod{T_0},\\
    \x(t+1, i,j) = \Maj\big(\x(t,i,j),\; &(\x(t,i+1,j) - 1 \bmod{M}),\; \\ 
        &\; \x(t,i,j+1)\big), \\
    \y(t+1, i,j) = \Maj\big(\y(t,i,j),\; &\y(t,i+1,j),\; \\ 
        &\; (\y(t,i,j+1) - 1 \bmod{M}) \big) 
\end{align*}
at all points on the trajectory.
\end{definition}

As indicated in Sec.~\ref{sec:overview}, our simulation framework starts with the transition function $\Univ$ of a universal quantum cellular automaton.
\begin{lemma}[Universal QCA] \label{lem:Univ-existence}
    There exists a nearest-neighbor local, translation-invariant dissipative QCA $\Univ$ on constant local dimension $d_\mathrm{Univ}$ in two spatial dimensions with the following properties:
    \begin{itemize}
        \item Each cell carries a quantum data register and a local program symbol $\gamma$ from an explicit finite alphabet.
        For any given, nearest-neighbor local update rule $\S$ there exist block-wise encoding and decoding maps such that, when the initial state carries an appropriate program string, evolution under $\Univ$ simulates evolution under $\S$ on the decoded state.

        \item The encoding block size and time overhead are linear in the circuit complexity of $\S$ (i.e. in the circuit size required to represent $\S$ as a brickwork pattern of updates from a fixed finite channel-set on qudits of constant, $\S$-independent dimension). 
        
        \item The program-state can always be chosen to be a computational-basis product state and remains so throughout the ideal evolution. On systems with periodic boundary conditions, the program string may be taken block-periodic.

        \item The attempted gate sequence depends only on the program symbols and not on the quantum data registers, so an input data error may change the simulated input state but it does not affect the simulated update rule.
    \end{itemize}
\end{lemma}
\begin{proof}
    We use the QCA of Ref.~\cite{arrighi2008}.
    To avoid conflict with our notation, we refer to its local bands as $d,\ p,\ m$ (for the data, program and mode bands, respectively, in Ref.~\cite{arrighi2008}'s language). The $d$-band stores the simulated system, the $p$-band stores the local operations to apply, and the $m$-band coordinates the motion of the $d$-signals along a block-periodic pattern. Gates are enacted when two $p$-band control signals meet and the relevant parts of the $d$-bands from two adjacent blocks are routed into position according to the control pattern in the $m$-band.  
    We deem the combined local state of the $p$- and $m$-bands as the local `program' state $\gamma$ with which the system is programmable for universal quantum computation. On periodic boundary conditions the control pattern can be repeated block-periodically. The required program length, i.e. the spatial period after which the program string repeats, is linear in the circuit size of the simulated local update, as are the required spatial and temporal overheads. 

    Ref.~\cite{arrighi2008} presents the universal simulation setup in one dimension. To implement the same routing mechanism in our two-dimensional setting, we replace the hexagonal routing pattern in space-time (Figs.~$7$-$10$ in Ref.~\cite{arrighi2008}) with a higher-dimensional analogue so that the information flows along elongated-square-bipyramid-shaped volumes arranged as shown in Fig.~\ref{fig:Univ-pattern}.
    The resulting local dimension remains constant.

    \begin{figure}[tb]
        \centering
        \includegraphics[width=0.2\textwidth]{figure-Univ-pattern.pdf}
        \caption{\textbf{Routing pattern for universal simulation of two-dimensional QCAs.}
        We adapt the universal simulator from Ref.~\cite{arrighi2008} to our setting.
        The simulated state is stored over several sites in the $d$-band while $p$- and $m$-bands determine and coordinate the simulated operations. The $d$-band flows along the edges of the depicted elongated-square-bipyramid-shaped volumes in spacetime (with the time direction going up), so that the depicted pattern corresponds to the simulation of two shifted $4$-qubit gates. The colored octahedra represent different states (`light gray' and `middle gray' in~\cite{arrighi2008}) of the ternary $m$-band steering the $d$-band motion, with the transparent space marking the third $m$-state (`dark gray' in~\cite{arrighi2008}) that enforces stationary $d$-band. 
        In our notation, we consider both the $p$-band and the $m$-band as well as the `emptiness' property of the $d$-band part of the overall program track, whereas the remaining degrees of freedom from the $d$-band are placed in the data track.
        } 
        \label{fig:Univ-pattern}
    \end{figure}

    To ensure that the intended simulation schedule is executed independently of the initial data state, we make one modification:
    Ref.~\cite{arrighi2008}'s $\Univ$ contains a distinguished `empty' basis state in the data subsystem and uses this in the routing so that if a fault flips whether a site is (non\nobreakdash-)empty then the routing pattern can be corrupted and subsequent gates could be affected.
    To prevent this, we instead decouple occupancy information from the quantum data, i.e. we store the empty/non-empty flag in the classical program pattern and insert the corresponding local reset of the data register at the start of each compiled gate. Concretely, in the program of a given QCA's simulation scheme, we insert a local channel that resets sites that are intended to be empty and maps back into the data subspace on sites that are intended to hold data,
    counting any mismatch as a data fault. 
    With this, the same programmed schedule is attempted regardless of input data state. 
    This yields a universal update channel compatible with our dissipative setting. 
    
    To make the dissipative aspect explicit, note that any quantum channel can be realized by unitary evolution together with local reset operations. Upon including on\nobreakdash-site reset as part of the allowed operations in the universal gate set of the universal QCA construction (i.e. the set of operations that can be encoded in the program track), the simulation becomes universal for quantum channels including non-unitary evolution. In particular, the controlled resets required in our scheme fit into the same local-channel model with a finite set of generating operations.    

    Note: An alternative universally programmable QCA with smaller local dimension is given in Ref.~\cite{shepherd2006}, based on shifting a program band past a data band with some ancilla and pointer states. 
    One could make this suitable to our requirements by inserting ancilla and pointer resets at the beginning of each compiled gate, treating these bands as part of the data layer and enforcing the desired simulation schedule regardless of their initial state.
\end{proof}

From now on we fix one such $\Univ$. The symbol $\gamma(t,i,j)$ denotes the local symbol from its classical program track, and across $M\times M$-sized blocks these $\gamma$-symbols specify the simulated QCA.
Since $\Univ$ is capable of arbitrary computation, it is also capable of implementing an arbitrary interacting quantum system (universal simulation).

% \begin{samepage}
We consider a quantum cellular automaton $\R_1$ that acts in each time-step as follows:
\begin{itemize}[label=-]
    \item On the structure registers, apply Toom's rule according to Def.~\ref{def:Toom-structural}.
    \item On the data registers, apply the update channel $\Univ$. The local program state $\gamma$ specifies the simulated rule and the remaining data degrees of freedom carry the simulated state.
\end{itemize}
% \end{samepage}

Suppose that the initial state contains the program of some quantum cellular automaton $\R_2$.
Noisy evolution under $\R_1$ then simulates $\R_2$ in a time-independent and translation-invariant system while the structural information is reliably stored with Toom's automaton. 
However, errors on the simulated state itself are never corrected if the automaton $\R_2$ contains no error-correcting mechanism.
To address this, instead of implementing an arbitrary automaton, we let $\R_1$ simulate itself, i.e. we give it a program that specifies \textit{its own transition rule}. 
This is well-defined if the sizes scale favorably, so that the local cells are capable of storing the required spacetime information modulo the automaton's own block size and periodicity needed to simulate a single higher-level step via $\Univ$.
We will later (see Lem.~\ref{lem:self-consistent-parameters}) analyze the resulting recursive constraints and show that there indeed exists a self-consistent choice of the parameters $M,\ T$, etc.

With this, the QCA can simulate itself recursively while correcting the structure state on all levels of the code. 
To correct the quantum state held in the data registers, we replace the bare $\Univ$ with a fault-tolerant encoding. The space-time dependence of this encoding scheme is incorporated by controlling with the locally available structure registers.

Let us make this precise. Take a complete set of local operations (including a dissipative `reset' via blank state preparation and discard) for which there exists a fault-tolerant encoding scheme furnished by Lem.~\ref{lem:concat-code}/Ref.~\cite{aharonov1997}, and denote it as $\mathcal{G}_{\mathrm{FT}}$.
We choose $\Univ$ so that the elementary operations addressed by its program track are exactly the operations from $\mathcal{G}_{\mathrm{FT}}$; this is only a constant-size modification of the universal-simulator construction of Ref.~\cite{arrighi2008} described above (affecting for instance the program alphabet size and the local cell dimension). We use the encoding scheme from Lem.~\ref{lem:concat-code} for the quantum registers in the data track and combine it with a classical repetition code on the program symbols. 
Because the program remains in product states and acts only as transversal control, its encoded version is again a computational-basis product state and can be protected by a classical majority correction gadget in parallel with the quantum data-code gadgets. 
The combined scheme itself satisfies the properties listed in Lem.~\ref{lem:concat-code}, yielding a fault-tolerant (Def.~\ref{def:FT-conditions}) concatenated correction scheme with measurement-free nearest-neighbor operations, all written in the same primitive $\mathcal{G}_{\mathrm{FT}}$. In particular, the recursive construction below does not invoke any gate-set approximation inside the self-simulation.
(NB: The auxiliary registers $\gamma_H$, $\gamma_V$ introduced in the refresh mechanism below are not part of the combined code, since these are transient ancillas reset in each EC-cycle and thus do not require explicit correction.
Also note that the dissipative reset steps do not require a separate fault-tolerance proof because, once these are included as permitted operations, faults during reset can be counted in the analysis exactly like faults on any other local operation.)

Throughout the paper, an encoded $\Univ$ gate refers to a $\Univ$ update encoded with respect to the combined program-data code, i.e. the program track is majority-corrected during the encoded update in parallel with the correction schedule of the quantum code on the data.
We write the $T_\mathrm{code}$\nobreakdash-length sequence of operations in the resulting schedule of an encoded $\Univ$-gate (including leading and trailing error-correction gadgets that map any input into the codespace) as a time- and spatially \textit{varying} transition rule $\Concat(t,i,j)$. 
From this we define the transition function $\Concat' := \Concat(\Toom(\tau,\x,\y))$ in which the space-time variation is supplied by a dependence on the (post-Toom-adjustment) local structure variables, rather than by an external clock. 
We fix the convention for all structure-dependent operations that an operation supported on a $2 \times 2$ plaquette is selected from the post-Toom structure variables at the southwest corner of the plaquette.
The QCA rule will include the application of $\Concat'$ to the data registers.
The description of the logical circuit instance $Q$ is stored only in the initial data state (as a spatially distributed tape) and is read when selecting the logical gate to simulate by the encoded $\Univ$-update implemented with $\Concat'$.

% \vspace{1 ex}
In the ideal evolution, the circuit schedule is carried along through the recursive pattern of the self-correcting state and the necessary error-correcting operations are performed at each step. Thus, the quantum state that is stored and computed on by the full system effectively undergoes the evolution of the fault-tolerant circuit $FT(Q)$. 
For each iteration of the self-simulating procedure, the $M$-sized block represents a single qudit of the level above. In the presence of noise, this qudit experiences an effective noise rate that depends on the physical noise on individual sites of the block.
To simplify our proofs, we include an explicit refresh phase of length $T_{\mathrm{ref}} = \bigO(M)$ before each simulation schedule of $\Concat'$. During this phase the structure layer is Toom-corrected at every step, while the program track undergoes a block-wise Toom-update implemented by a sequence of shift steps across neighboring blocks.

\Needspace{3\baselineskip}
The resulting definition of our QCA is:
\begin{definition}[Error-correcting simulator] \label{def:EC-procedure}
    Fix a universal QCA $\Univ$ in accordance with Lem.~\ref{lem:Univ-existence}. Fix a quantum error-correcting code satisfying Lem.~\ref{lem:concat-code}. We use the same block size $M_\mathrm{code}$ for the quantum data code and for the classical repetition code protecting the program symbols. Let $T_\mathrm{code}$ denote a common upper bound, after padding by identity locations, on the length of the fault-tolerant gadget schedule implementing one encoded $\Univ$ update (including leading and trailing error-correction gadgets), and one fault-tolerant correction gadget for the classical repetition code on the program symbols.
    Given the transition rule $\R_2$ of some QCA, let $M/M_\mathrm{code}$ be the block size and $T_\mathrm{sim}$ the time-overhead for simulating $\R_2$ with $\Univ$.
   
    We define the \emph{error-correcting simulator} as the QCA with transition rule $\R$ whose action on each $M \times M$ block follows a $T_0$-step cycle:
    \begin{enumerate}
        \item In the first $T_{\mathrm{ref}} = T_{\mathrm{str-ref}} + T_{\mathrm{prog-ref}}$ steps (refresh phase):
        First, for $T_{\mathrm{str-ref}}$ steps, apply $\Toom$-updates to the structure registers $(\tau,x,y)$ at all sites while keeping the program and quantum data registers idle. 
        Then, for $T_{\mathrm{prog-ref}} = T_\mathrm{code} + 1 + M + 1 + T_\mathrm{code}$ steps, while continuing the $\Toom$-updates on the structure registers, in parallel refresh the program symbols $\gamma$ using two auxiliary local registers $\gamma_H$, $\gamma_V$ as follows:
        \begin{itemize}[label=-]
            \item Apply the fixed schedule of the correction gadget (majority vote in the $M_\mathrm{code}$-sized blocks) of the classical repetition code to the program symbols $\gamma$ once.
            \item Then, reset the auxiliary registers $\gamma_H$, $\gamma_V$ to a blank state and copy the local $\gamma$-symbol into them at each site.
            \item Next, for $M$ steps, shift $\gamma_H$ westward and $\gamma_V$ southward site-by-site.
            \item Replace each $\gamma$ by the majority of its current value and the imported symbols in $\gamma_H$, $\gamma_V$ originating from sites in the east and north neighboring blocks with the same within-block coordinates. 
            \item Finally, again apply the repetition code's correction gadget to each $M_\mathrm{code}$-sized block of program symbols.
        \end{itemize} 
        This sequence implements a single fault-tolerant $\Toom$-update on the program symbols across blocks.
        The quantum data registers are otherwise idle during this phase.
        
        \item
        In the next $T_\mathrm{code}$ steps (simulation phase):
        Apply the fixed measurement-free gadget schedule $\Concat'(\tau, x, y)$ that performs one step of $\Univ$ on the logical states encoded in the data registers (with the refreshed local program symbols specifying the simulated rule).
        In parallel, continue to apply $\Toom$-correction to the structure registers $(\tau,x,y)$ at all sites.
    \end{enumerate}
    The cycle length is $T_0=T_{\mathrm{ref}}+T_{\mathrm{code}}$. The current point in the cycle is determined from the locally believed time $\tau$. 
    Repeating the cycle $T_\mathrm{sim}$ times simulates one step of $\R_2$ on the encoded higher-level cell,
    so one macro-step at the next level has duration $T = T_0 T_\mathrm{sim}$. The full sequence of $T$ updates defines the error-correcting procedure $P$.
\end{definition}

Finally, we determine a choice of parameters with which the self-simulation is well-defined (see Fig.~\ref{fig:self-simulation}):

\begin{figure}[tb]
    \centering
    \includegraphics[width=0.5\textwidth]{figure-self-simulation.pdf}
    \caption{\textbf{Self-simulating setup.}
    We construct a fault-tolerant universal quantum cellular automaton $\R$. 
    For a given QCA $\R_2$ which acts on $d$-dimensional qudits, we consider the simulation on $d_\mathrm{Univ}$-dimensional qudits with a known, uncorrected universal QCA $\Univ$. 
    We compile $\Univ$ with a concatenated encoding scheme and implement the resulting schedule in a translation-invariant, time-independent system using locally stored space-time coordinates $(\tau, x,y)$ which are themselves corrected from a local neighborhood.
    The block-period $M \times M \times T$ that represents one simulated step on the encoded data must be large enough to accommodate the complexity of the simulated automaton $\R_2$. 
    To achieve autonomous error-correction, we analyze the dependencies between the parameters indicated in this Figure and show that there exists a consistent choice with which the system can self-simulate ($ \R_2 = \R$).
    } 
    \label{fig:self-simulation}
\end{figure}

\begin{lemma}[Self-simulating fixed point] \label{lem:self-consistent-parameters}
    Fix the universal QCA $\Univ$ (and fix the value of $\tEC$ to be used later in Sec.~\ref{sec:proof-of-FT-conds}). 
    Then, for every $T_\mathrm{str-ref} = \bigO(M)$, 
    there exists a choice of parameters
    $$M,\ M_\mathrm{code},\ T,\ T_\mathrm{ref},\ T_\mathrm{code},\ T_0,\ T_\mathrm{sim},\ d$$
    with which the QCA $\R$ of Def.~\ref{def:EC-procedure} is well-defined on local dimension $d$, and $\Univ$ can simulate $\R$ on $M/M_\mathrm{code}$-sized blocks with time-overhead $T_\mathrm{sim}$.
    (The specific value of $T_\mathrm{str-ref}$ is fixed later in Sec.~\ref{sec:proof-of-FT-conds} according to absolutely-constant requirements on the Toom-mechanism, independent of the other parameter choices.)
    
    Thereby the recursive constraints defining the self-simulation $\R_2 = \R$ have a self-consistent solution.
\end{lemma}
\begin{proof}
    Choose a constant data register dimension $d_D = d_\mathrm{Univ} \cdot \mathrm{dim}(\gamma)^2$ large enough to hold one cell of $\Univ$ (including its local program symbol $\gamma$) together with the two auxiliary program-refresh registers $\gamma_H$, $\gamma_V$. We use a concatenated code $\mathcal{C}_D$ via Lem.~\ref{lem:concat-code} on $d_D$-dimensional qudits. 
    We will later in Sec.~\ref{sec:proof-of-FT-conds} choose an absolute constant $\tEC$ (determined by the locality of the update rule and by the number of faults whose correction we impose per $M \times M \times T$ macro-step of $\R$). The distance of the chosen $\mathcal{C}_D$ is taken large enough to correct at least $\tEC$ errors. 
    This fixes the code's block size $M_\mathrm{code}$ and schedule length $T_\mathrm{code}$ needed to execute one encoded, corrected gate gadget.

    Now, we examine the simulation with $\Univ$. 
    We consider the constraints in terms of the block size $M \geq M_\mathrm{code}$ and period $T = T_0 T_\mathrm{sim} \geq T_\mathrm{code}$ (for simplicity, take integer multiples).
    Each cell of $\R$ contains a data register of fixed dimension $d_D$. It also contains the structure registers $\tau\in \mathbb{Z}_{T_0}$, $x, y\in \mathbb{Z}_M$ at each site, with dimension $d_S = T_0 M^2$. Therefore one cell of $\R$ has local dimension $d = d_D d_S = \bigO(T_0 M^2)$ and its state can be specified using $\bigO(\log d) = \bigO(\log T_0 + \log M)$ bits. 
    For the refresh phase, by definition $T_\mathrm{ref} = T_\mathrm{str-ref} + M + 2 T_\mathrm{code} + 2$. 
    We keep $T_\mathrm{str-ref}$ arbitrary subject to $T_\mathrm{str-ref} = \bigO(M)$, so this gives $T_\mathrm{ref} = \bigO(M)$. (The self-simulation constraints depend on $T_\mathrm{str-ref}$ only through the constants hidden in $\bigO(M)$, so the later choice in Sec.~\ref{sec:proof-of-FT-conds} is admissible.)
    To simulate the QCA $\R$, $\Univ$ spreads out the higher-level local information across several current-level data registers and performs a sequence of simulating operations. Specifically, in the universal simulator of Lem.~\ref{lem:Univ-existence} / Ref.~\cite{arrighi2008}, the overhead from this representation imposes $M \geq \bigO(\max(\log d, |\gamma|))$ and $T_\mathrm{sim} \geq \bigO(M)$ where $|\gamma|$ is the length of the spatially distributed program. The program length is linear in the circuit complexity of decomposing the simulated QCA's update rule into the programmable operations of $\Univ$, i.e. into $\mathcal{G}_\mathrm{FT}$-operations on $\Univ$'s constant-dimensional data track.

    It thus remains to bound the complexity of a program that implements our specific update rule $\R$. 
    On the structure registers, each step of Def.~\ref{def:EC-procedure} is a composition of dephasing into the computational basis with classical comparison, modular arithmetic and majority selection on the $\bigO(\log T_0 +\log M)$ bits encoding $\tau,x,y$. 
    These operations admit exact $\mathcal{G}_\mathrm{FT}$-circuits (for instance, using the gate set $\mathcal{G}_1$ from Ref.~\cite{aharonov1997}, shifts are SWAP-networks, classical copy is CNOT on computational-basis states, majority/comparison/modular arithmetic are reversible classical circuits, reset is exact using blank-state preparation and discard, and computational-basis dephasing is exact by copying to a fresh ancilla and discarding it).
    The resulting circuit complexity is $\polylog(T_0 M)$. 
    To analyze the action on the data registers, let $\mathcal{A}$ denote the finite set of data channels appearing in Def.~\ref{def:EC-procedure}. Besides idle steps and the classical operations on the refresh registers (dephasing into computational basis and copy/shift/majority gates on $\gamma_{H}, \gamma_V$), $\mathcal{A}$ contains only the finitely many local channels appearing in the fixed encoding schedule $\Concat'$, which by construction are already written in $\mathcal{G}_\mathrm{FT}$.
    The local data update of $\R$ is obtained by first computing a classical control label $a \in \{1,\hdots,|\mathcal{A}| \}$ from the structure layer and then applying the corresponding channel $A_a \in \mathcal{A}$. 
    The ideal data action depends only on the current subphase (structure-refresh, program-correction, refresh-copy, refresh-shift, refresh-majority, program-correction, simulation) and, during the program-corrections and the simulation phase, on the residues $(\tau-T_{\mathrm{str-ref}}) \bmod T_{\mathrm{code}}, (\tau-T_{\mathrm{str-ref}}-T_\mathrm{code}-M-2) \bmod T_{\mathrm{code}}, (\tau-T_{\mathrm{ref}}) \bmod T_{\mathrm{code}},\ x\bmod M_{\mathrm{code}},\ y\bmod M_{\mathrm{code}}$. Since $M_{\mathrm{code}}$ and $T_{\mathrm{code}}$ are constants, the map to the required channel of $\Concat'$ is a constant-size lookup table. Thus the selector of the local data-channel is computable by a control circuit of size $\polylog(T_0 M)$, meaning that the necessary program length in our case is $|\gamma| = \bigO(\polylog(T_0 M))$.

    Let us assemble the dependencies. Note that $T_0 = T_\mathrm{ref} + T_\mathrm{code} = \bigO(M)$, since $T_\mathrm{ref} = \bigO(M)$ and $T_\mathrm{code} = \bigO(1)$. 
    The bounds above imply $d = \bigO(d_D T_0 M^2)$ and $|\gamma| = \bigO(\polylog(T_0 M))$ so the block size constraint reduces to $M \geq \bigO(\polylog T_0 M)$. This holds for all sufficiently large $M$.
    After padding by identity program cells if necessary, we choose $M$ divisible by $M_\mathrm{code}$ and large enough to satisfy this condition and the additional constant inequalities imposed in Sec.~\ref{sec:proof-of-FT-conds}. The padding may increase $T_\mathrm{sim} = \bigO(M)$, but $T_\mathrm{sim}$ still remains a finite constant and does not enter the constants controlling the $T_0$-scale damage analysis in Sec.~\ref{sec:proof-of-FT-conds}. Finally, we set $T = T_0 T_\mathrm{sim}$.
    This gives a finite parameter choice for which $\R$ is capable of simulating itself, so taking $\R_2 = \R$ in Def.~\ref{def:EC-procedure} is consistent. 
\end{proof}

We fix one parameter choice given by Lem.~\ref{lem:self-consistent-parameters} and set $\R_2 = \R$ in Def.~\ref{def:EC-procedure}. 
This is the final definition of the error-correcting simulator. 
The error-correction is automatically scaled with layered self-simulation up to the largest hierarchical level supported by the finite system size. 

We emphasize that the registers $\tau, x, y$ are stored locally in the structure state at each site and corrected with Toom's rule. In contrast, the simulator program is spatially distributed across the local symbols $\gamma$ within the data layer of \textit{each} $M \times M$ block and is refreshed by the copy-shift-majority Toom-gadget separately for each within-block coordinate. (Since the program encoding in our concatenated code is classical repetition, the site-wise refresh effectively acts as a Toom-update on the encoded program symbols used by the encoded $\Univ$.) 
The fault-tolerant encoding schedule $\mathcal{E}(Q)$ of a desired logical circuit $Q$ is distributed over \textit{many} same-layer blocks within the remaining data degrees of freedom (for a larger circuit, this will require more blocks). The corresponding state must be initialized in the data layer of the system and is then corrected with the concatenated quantum code.

Our self-simulation mechanism builds on the fact that the one-level encoded $\Univ$ is also a universal simulator, since one may take the program required to simulate a given QCA with $\Univ$, encode it and input it to the encoded $\Univ$. 
An alternative approach is to directly use the unencoded $\Univ$ but this would require explicit fault-tolerant error-correction of the simulator, which complicates the analysis (cf. \cite{gacs1989}).
Using a one-level encoded simulator makes the fault-tolerant operation of the QCA relatively transparent and allows for significantly shorter proofs (because we can rely on existing fault-tolerance results from Lem.~\ref{lem:concat-code}, making it for example unnecessary to construct a separate codeword enforcement mechanism).

The remainder of this Section is devoted to proving that the procedure $P$ indeed achieves self-correcting quantum computation.

\subsection{Sufficient conditions for passive error correction} \label{sec:conds-selfcorrecting-QCA}

We establish a general set of sufficient conditions that a quantum system may possess to ensure passive error correction. This adapts the `extended-rectangle' (exRec) formalism, originally developed for fault-tolerance proofs of active concatenated quantum error correcting circuits~\cite{aliferis2006}, to our self-correcting setting.
The intuition is that any quantum system that performs a self-simulation while also satisfying recursive fault-tolerance conditions achieves self-correction because, on the one hand, any trajectory in which the operations are performed according to the usual concatenated fault-tolerant prescription is shown to correct errors below some threshold noise-rate via the exRec accounting. On the other hand, the prescribed pattern of operations is automatically self-preserved by the transition rules due to its self-simulation with a chosen initial state.

Let us start by describing our noise model. Because the QCA is updated synchronously at integer times, its ideal evolution can be viewed as a translation-invariant circuit where each QCA step applies a constant-depth pattern of local operations to the lattice. A \textit{location} is the local operation applied in the QCA step at a specified space-time position. (In particular, a two-qudit gate is one location, not two.) 
An \textit{exRec}
is the full space-time set of locations participating in a single simulated update, including possible error-correction steps before and after the simulating operations. 
We model noise as acting on each location by coupling the system to an arbitrary environment and replacing the ideal operation by a joint system-bath unitary.
This yields a \textit{fault-path expansion} where the full noisy evolution is characterized by the subset of space-time locations at which a fault occurs.

We allow correlated, non-Markovian \textit{local noise} in the sense of~\cite{aliferis2006, terhal2005, aharonov2006}:
\begin{definition}[Local noise] \label{def:local-noise}
    For a set $F_r$ of $r$ space-time locations, let $E(F_r)$ denote the sum over all fault paths in which faults occur at all locations in $F_r$.
    The noise is local with strength $\eta$ if, for every $r$ and $F_r$, $\| E (F_r) \| \leq \eta^r$.
\end{definition}
This model allows arbitrary correlations across space and time, mediated by the bath.
It is automatically satisfied whenever the system-bath interaction is local, meaning that faults couple only to the qudits acted on at that location~\cite{aliferis2006}. 
We assume that the physical system evolves under synchronized applications of the QCA at discrete time-steps and experiences local noise. 
For intuition one may picture the special case of independent stochastic faults with local error probability $p$, which corresponds to noise strength $\eta = \sqrt{p}$.

The exRec formalism provides a way to keep track of the rescaling of errors in concatenated error correction schemes. 
We review the method in Sec.~\ref{sec:exRec-review}.
The core intuition is that recursive error correction works reliably whenever it is possible to view blocks of local operations as a `higher-level'/coarse-grained operation with improved noise properties. 
Known sufficient conditions for fault-tolerance (Def.~\ref{def:FT-conditions} in Sec.~\ref{sec:exRec-review}) ensure that the effective renormalized noise model is indeed local with diminished strength. 
The standard result is:
\begin{lemma}[Threshold theorem. Proof in \cite{aliferis2006}, proof\-sketch in Sec.~\ref{sec:exRec-review}] \label{lem:threshold-thm}
    Consider a quantum computer subject to local noise. 
    Let $\mathcal{C}$ be a quantum error-correcting code that corrects $\tEC \geq 1$ errors per block and let $\mathcal{E}$ be a concatenated encoding scheme that replaces the ideal logical operations of a given circuit by respective gadgets which each satisfy the fault-tolerance conditions from Def.~\ref{def:FT-conditions}.

    Then there exists a constant threshold $\eta_\text{th} > 0$ such that any ideal circuit of depth $D$ on $N$ qubits can be simulated to within logical accuracy $\delta$ at the level $k = \bigO( \log \log(N D / \delta) )$ encoded circuit with local physical noise strength $\eta < \eta_\text{th}$.
\end{lemma}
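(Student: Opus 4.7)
The plan is to adapt the extended-rectangle (exRec) accounting of \cite{aliferis2006} to track how physical noise rescales through each level of the concatenated code $\mathcal{C}$. At concatenation level $\ell$, an exRec consists of a logical gate gadget together with the preceding EC gadgets on each of its input blocks. The fault-tolerance conditions in Def.~\ref{def:FT-conditions} guarantee that any exRec containing at most $\tEC$ physical level-$\ell$ faults is \emph{good}: it simulates the ideal logical operation and leaves the output blocks within the $\tEC$-error budget of the next level's EC. Exceeding $\tEC$ faults makes the exRec \emph{bad}, and we charge it as a single effective fault at level $\ell+1$. The recursive exRec boundary convention recalled above guarantees that block-crossing locations adjacent to a bad exRec are absorbed into it, so each physical fault is charged to exactly one macro-exRec and the promotion is consistent.

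The core step is to bound the effective noise strength $\eta_{\ell+1}$ at level $\ell+1$ in terms of $\eta_\ell$. Because $\mathcal{E}$ fixes a constant-size gadget set, every exRec contains at most $A=\bigO(1)$ space-time locations, so the number of fault patterns of weight $r\geq \tEC+1$ sitting inside a single exRec is at most $\binom{A}{r}$. Summing the local-noise bound $\|E(I_r)\|\leq \eta_\ell^{\,r}$ from Def.~\ref{def:local-noise} over these subsets gives a geometric tail controlled by $\eta_{\ell+1}\leq C\,\eta_\ell^{\,\tEC+1}$ for an explicit constant $C=C(A,\tEC)$. Defining $\eta_\text{th}:=C^{-1/\tEC}$, the recursion becomes $\eta_{\ell+1}/\eta_\text{th}\leq (\eta_\ell/\eta_\text{th})^{\tEC+1}$, and iterating yields the double-exponential suppression
\begin{equation*}
    \eta_\ell/\eta_\text{th} \;\leq\; (\eta/\eta_\text{th})^{(\tEC+1)^\ell}
\end{equation*}
whenever the physical noise satisfies $\eta<\eta_\text{th}$.

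To convert this into the stated resource bound I would use that the level-$k$ encoded simulation of the depth-$D$ circuit on $N$ qubits contains $\bigO(ND)$ logical locations. Telescoping the exRec expansion gives a total deviation from the ideal logical evolution of operator norm at most $\bigO(ND)\cdot \eta_k$. Demanding this be at most $\delta$ leads to $(\eta/\eta_\text{th})^{(\tEC+1)^k}\leq \delta/\bigO(ND\,\eta_\text{th})$, which after taking two logarithms gives $k=\bigO(\log\log(ND/\delta))$, matching the claim.

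The step that requires the most care, and which I would spend the bulk of the proof on, is the exRec bookkeeping when several adjacent exRecs are simultaneously bad: one must verify that the recursive truncation of shared block-crossing locations (as fixed in Sec.~\ref{sec:conds-selfcorrecting-QCA}) charges each physical fault to exactly one macro-exRec and preserves the constant-$A$ counting at level $\ell+1$. This is where Def.~\ref{def:FT-conditions} must be invoked in its full strength, since it has to simultaneously guarantee (i) correctness of a good exRec under $\leq \tEC$ faults and (ii) that the residual error at the output of a good exRec is compatible with the $\tEC$-fault budget of the next-level EC. With these conditions in place, the recursive noise bound is immediate and the choice of $k$ above closes the argument; the remaining ingredients are essentially combinatorial and appear in detail in \cite{aliferis2006, gottesman2024}.
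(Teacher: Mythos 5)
Your proposal follows essentially the same route as the paper's Appendix~\ref{app:exRec-review}: exRec good/bad accounting, the combinatorial bound $\eta_{\ell+1}\leq A\,\eta_\ell^{\tEC+1}$ with threshold $\eta_\text{th}=A^{-1/\tEC}$, the union bound over $\bigO(ND)$ logical locations, and $k=\bigO(\log\log(ND/\delta))$. The only cosmetic difference is that the paper phrases the weight-$\geq\tEC+1$ fault-path count via inclusion--exclusion (since each $E(I_r)$ already contains higher-weight paths) and makes the good-implies-correct step explicit through the unitary $*$-decoder insertion, but your argument is the same in substance.
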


In our proofs we use the noise-renormalization step underlying Lem.~\ref{lem:threshold-thm}. We adopt an \emph{overlapping} exRec convention where elementary operations on the spatial boundary between adjacent exRecs are counted as belonging to each incident exRec. For local encoding schemes, a single lower-level faulty location can contribute to the coarse-grained noise in at most $R = \bigO(1)$ exRecs and the renormalization exponent is reduced by the overlap factor $R$ (see Lem.~\ref{lem:threshold-thm-2} in Sec.~\ref{sec:exRec-review} below).

In conventional concatenated error correction schemes the recursive simulation of the logical circuit is a hard-coded sequence of operations. The distinguishing property of our construction is that the hierarchical structure instead automatically maintains itself when the engineered interactions are applied everywhere at the lowest level. To formulate a general criterion for this, we use a bookkeeping device that tracks where the decoded correctness may fail (cf.~\cite{gacs1989}). We call this the \textit{damage set} $\Dam_l(t)$ for each level $l$ and macro-time $t$, which should be read as the damaged region at scale $l$. This does not need to be the exact set of erroneous level-$l$ sites, rather it is a controlled overestimate of the sites whose decoded state may have been contaminated by nearby faulty (simulated) operations. In the concrete QCA of our scheme, this damaged region will be obtained by enlarging neighborhoods of faulty macro-locations and then letting that enlargement shrink during noiseless updates according to the geometric erosion of Toom's rule. The abstract proof requires only two properties of such a construction: correctness outside the damaged region and a sparsity bound showing that the damaged region is unlikely to hit a prescribed set of macro-locations.

\begin{definition}[Sufficient criteria for self-correcting QCA] \label{def:self-correcting-QCA}
    Consider a discrete-time QCA with local update rule $\R$.
    We say that the QCA is \emph{self-correcting} if there exist integers $M, T$ and a code $\mathcal{C}$ on $M \times M$ blocks of cells    
    for which:

    \begin{enumerate}

        \item Self-simulation.
        There exist block encoding/de\-coding maps for the code $\mathcal{C}$ such that the evolution of an $M \times M$ block for $T$ steps implements one logical step of $\R$ on the encoded higher-level cell.

        \item Fault-tolerance. 
        Fix an integer $\tbad \geq 1$ and define exRecs to be the spacetime blocks of size $M \times M \times T$ that implement one simulated update of the next-level cell. The exRecs are partitioned into macro-locations that are $M \times M \times T_0$-sized spacetime blocks, for some fixed $T_0 \mid T$.
        For each self-simulation level $l$, let $\mathcal{N}_l$ denote the set of macro-locations that are declared `bad' (i.e. that contain $>\tbad$ level-$(l{-}1)$ faults, with the overlapping-boundary convention) and let $\eta_l$ denote the smallest constant such that the induced level-$l$ noise satisfies Def.~\ref{def:local-noise} with strength $\eta_l$.
        \\
        Then, for each level $l$ and each fault-path instantiation, there exists a (deterministically defined) family of `damage sets' $\Dam_l(t)$ of level-$l$ lattice locations at macro-times $t \in T_0\mathbb{Z}$ such that:
        \begin{enumerate}
            \item Correctness outside damage. At all times, every level-$l{+}1$ exRec in which all constituent macro-locations have output lying outside $\mathrm{Dam}_{l+1}(t)$ (where $t \in T_0\mathbb{Z}$ is the final macro-time of the respective macro-location) is correct, i.e. after decoding it implements the ideal noiseless level-$l{+}1$ update on the decoded input.
            At the top-level this ideal update includes simulated evolution under the prescribed circuit $Q$.
        
            \item Sparsity of damage.
            Suppose that the system is initially prepared in an appropriately chosen input state (that may depend on the prescribed circuit $Q$) with $\Dam_l(0) = \emptyset$.
            Then there exists a constant $B$ (independent of $l$ and system size) such that for any finite set $V$ of level-$l$ macro-locations in space-time, the fault-path sum restricted to trajectories in which at least one location of $V$ is damaged has operator norm at most $B |V| \eta_l$.
        \end{enumerate}
 
    \end{enumerate}
\end{definition}

For a set $V$ of level-$l$ locations, let $E_l^\mathrm{dam}(V)$ denote the fault-path sum restricted to those fault paths for which $\Dam_l \cap V \neq \emptyset$. With this notation, Def.~\ref{def:self-correcting-QCA}.2(b) is the statement $$\| E_l^\mathrm{dam}(V) \| \leq B |V| \eta_l .$$

We analyze the passive evolution under a self-correcting QCA. We will not require the physical initial state to be correct on every level of simulation. 
Instead, we assume that there exists an encoding level $k_0 = \bigO(1)$ in the system on which, at time $t=0$, the decoded configuration agrees with the ideal level-$k_0$ encoded state everywhere, so $\Dam_{k_0}(0) = \emptyset$.
From that level upwards the system simulates itself correctly, so in the analysis we can restrict to levels $\geq k_0$ and treat all finer levels as the underlying implementation. We refer to such an initial state as \emph{sufficiently close} to the desired initial state.
We show:
\begin{proposition} [Threshold for self-correcting QCA] \label{prop:QCA-threshold}
    Consider a QCA satisfying Def.~\ref{def:self-correcting-QCA} and suppose the physical noise is local with strength $\eta$ (Def.~\ref{def:local-noise}). Let $k$ be the largest encoding level within a finite-size system and assume that the initial state is sufficiently close to the desired one (in the sense specified above).

    Then there exists a constant threshold $\eta_\text{th} > 0$ so that, whenever $\eta < \eta_\text{th}$, the following holds:
    Let $Q$ be any finite logical computation and compile it into a translation-invariant circuit of depth $D$ on $N$ qubits. Then $Q$ can be simulated to within logical accuracy $\delta$ on the top-level if 
    $$ \eta^{{\alpha}^{k}} \leq \bigO(\delta / N D)$$ 
    with $\alpha = (\tbad + 1 )/ R > 1$ for sufficiently large $\tbad$. 
    In particular, for fixed $\eta < \eta_\text{th}$, it suffices to use $k = \bigO(\log \log(N D / \delta))$ levels of self-simulation.
\end{proposition}

\begin{proof}

We check that each level of encoding reliably diminishes the noise strength. 
Let $\eta_0 = \eta$. 
Fix a level $l$ and suppose that the noise strength on level $l$ is $\eta_l$.
The time-slices of the QCA's trajectories can be mapped to the hard-coded steps of a gadget's circuitry in active QEC protocols. 
Because each simulation block is a fixed-depth pattern of local operations, we can treat it as an EC-gadget and apply the exRec formalism.
We use the badness notion from Def.~\ref{def:self-correcting-QCA}, viewing bad macro-locations as noisy operations on the simulated level.
By Lem.~\ref{lem:threshold-thm-2}, the resulting level-$(l{+}1)$ macro-noise strength satisfies
$$\eta_{l+1} \leq A \eta_l^{\alpha}, \quad \text{with } \alpha = \frac{\tbad +1}{R}$$
where $R = \bigO(1)$ is the spatial overlap factor determined by the locality of the QCA's operations 
and $A$ is a combinatorial constant determined by the maximal number of locations within an exRec.
Iterating for $k$ levels, the concatenation structure builds itself whenever the highest level of encoding within the system experiences at most a correctable number of errors. For sufficiently large $\tbad$, the resulting exponent is $\alpha > 1$. Hence the induced macro-noise strength is diminished whenever the physical noise is below the constant threshold 
$\eta_\mathrm{th} = A^{- 1/(\alpha -1)}.$

The coarse-grained noise strengths $\eta_l$ are related to logical errors via the damage sets.
We consider the top-level ($l = k$) encoded circuit implementing one iteration of the simulation and let $V_Q$ denote the full set of spacetime macro-locations (with $|V_Q| = ND$ being the number of spacetime locations in the desired translation-invariant logical circuit). By Def.~\ref{def:self-correcting-QCA}.2(a), every fault-path for which $V_Q \cap \Dam_k$ is empty produces the ideal decoded evolution on $V_Q$. The logical deviation is supported only on fault-paths with $V_Q \cap \Dam_k \neq \emptyset$, and therefore
$$ 
    \sup_{\|O\|\le 1} \left| \operatorname{Tr}\!\left[ O\left(\rho^{\mathrm{noisy}}_{\mathrm{out}}(Q)-\rho^{\mathrm{ideal}}_{\mathrm{out}}(Q)\right) \right] \right| 
    \leq \| E_k^\mathrm{dam}(V_Q) \| .
$$
By Def.~\ref{def:self-correcting-QCA}.2(b), 
$ \| E_k^\mathrm{dam}(V_Q) \| \leq B |V_Q| \eta_k$. Thus, to achieve output accuracy $\delta$, it suffices to choose $k$ such that $B \eta_k \leq \delta / N D$.

Substituting the recursive bound on $\eta_k$ gives the stated suppression.
\end{proof}

For the memory setting, $Q$ is the identity circuit for $D$ macro-steps. The bound from Prop.~\ref{prop:QCA-threshold} gives failure probability at most $\bigO(N D \eta_k)$. Since $\eta_k$ decreases with the number of encoding levels present in the system, the memory lifetime diverges as the system size increases.

\subsection{Review of the extended rectangle formalism} \label{sec:exRec-review}

We briefly review the general exRec method. This Section may be skipped by readers already familiar with the standard formalism. A more detailed exposition can be found in Refs.~\cite{aliferis2006, gottesman2024}.

Let $\mathcal{C}$ be a quantum error-correcting code which can correct up to $\tEC$ errors per block. 
A fault-tolerant encoding $\mathcal{E}$ is given by several \textit{gadgets} that systematically replace parts of the circuit with their error-corrected variants. 
We divide a given quantum circuit $Q$ into separate \textit{locations} at which a local action (such as state preparation, application of a gate, measurement, or storage) is performed on the qudits.
The encoded circuit $\mathcal{E}(Q)$ is defined by replacing each qudit with an encoded qudit, replacing each location with a gadget that performs the respective action on the encoded qudits, and applying error correction operations (EC-gadget) after each step.

We use a graphical representation to specify sufficient conditions for fault-tolerance. Encoded blocks of the code are indicated by double-lines whereas single lines represent single unencoded qudits.
Let $\Diagram{\filter{r}}$ denote the projector onto the code's subspace 
of states with at most $r$ errors on the block.
We write $\Diagram{\decoder}$ for the ideal decoder of $\mathcal{C}$. We represent the gadgets graphically as 
\begin{align*}
    \Diagram{\preparation{r}} \quad &= \text{state preparation gadget with $r$ faults}, \\
    \Diagram{\Gate{r}} &= \text{gate gadget with $r$ faults}, \\
    \Diagram{\EC{r}} &= \text{error correction gadget with $r$ faults}, \\
    \Diagram{\measurement{r}} \qquad &= \text{measurement gadget with $r$ faults.}
\end{align*}

We use this notation to define a set of conditions which a given fault-tolerant encoding scheme (i.e. a set of gadgets for encoded state preparation, gate application, measurement and error correction) may possess.

\begin{definition} [Conditions for fault-tolerance] \label{def:FT-conditions}
    Let $Q$ be a given quantum circuit and consider the gadgets for producing the one-level encoded circuit $\mathcal{E}(Q)$. 
    Fix an integer $\tEC$ (the code's correctable number of errors).
    We define the following properties:
    \begin{description}
        \item[State preparation] 
        A state preparation gadget satisfies the correctness and propagation properties, respectively, if
        $$\Diagram{\preparation{r} \decoder } \quad = \quad \diagram{ \preparation{} } 
        $$
        and
        $$\Diagram{\preparation{r}} \quad = \quad \Diagram{ \preparation{r} \filter{r} } 
        $$
        hold for $r \leq \tEC$.

        \item[Gate application] 
        A single-qubit gate gadget satisfies the correctness and propagation properties, respectively, if 
        \begin{equation*}
            \Diagram{\filter{r} \Gate{s} \decoder } \quad = \quad \Diagram{ \filter{r} \decoder \gate{} } 
        \end{equation*}
        and
        \begin{equation*}
            \Diagram{ \filter{r} \Gate{s} } \quad = \quad \Diagram{ \filter{r} \Gate{s} \filter{r+s} } 
        \end{equation*}
        hold for $r + s \leq \tEC$.
        
        Similarly, a two-qudit gate gadget satisfies the correctness and propagation properties, respectively, if 
        \begin{equation*}
            \TQDiagram{
                \filter{r_1} \TQGate{s} \decoder 
                \draw[double] (0,-0.625) -- (0.5,-0.625);
                \filter{r_2}
                \draw[double] (2.25,-0.625) -- (2.75,-0.625);
                \decoder
            } 
        \quad = \quad 
            \TQDiagram{ \filter{r_1} \decoder \TQgate{} 
                \draw[double] (0,-0.625) -- (0.5,-0.625);
                \filter{r_2}
                \decoder
                \draw (3.25,-0.625) -- (3.75,-0.625);
            } 
        \end{equation*}
        and
        \begin{equation*}
            \TQDiagram{\filter{r_1} \TQGate{s}
                \draw[double] (0,-0.625) -- (0.5,-0.625);
                \filter{r_2}
                \draw[double] (2.25,-0.625) -- (2.75,-0.625);
            } 
        \quad = \quad 
            \TQDiagram{ \filter{r_1} \TQGate{s} \filter{r_1 + r_2 +s} 
            \draw[double] (0,-0.625) -- (0.5,-0.625);
            \filter{r_2}
            \draw[double] (2.25,-0.625) -- (2.75,-0.625);
            \filter{r_1 + r_2 + s}
            } 
        \end{equation*}
        hold for $r_1 + r_2 + s \leq \tEC$.
        
        \item[Measurement] 
        A measurement gadget satisfies the correctness property if
        \begin{equation*}
            \Diagram{\filter{r} \measurement{s} } \quad = \quad 
            \Diagram{\filter{r} \decoder \measurement{} } 
            \qquad \text{for $r + s \leq \tEC$.}
        \end{equation*}
        
        \item[Error correction] 
        An error correction gadget satisfies the correctness and propagation properties, respectively, if 
        \begin{equation*} %\label{eqn:EC-correctness-property}
            \Diagram{\filter{r} \EC{s} \decoder } \quad = \quad \Diagram{ \filter{r} \decoder } 
        \end{equation*}
        holds for $r + s \leq \tEC$,
        and
        \begin{equation*} %\label{eqn:EC-propagation-property}
            \Diagram{ \EC{s} } \quad = \quad \Diagram{ \EC{s} \filter{s} } 
        \end{equation*}
        for $s \leq \tEC$.
        Note that the latter property is required for any input state regardless of the number of errors.
    \end{description}
    An encoding scheme $\mathcal{E}$ is \emph{fault-tolerant} if it consists entirely of gadgets that obey the above properties.
\end{definition}

Having defined the properties of individual gadgets, we can analyze the effective noise strength of the encoded circuit in such schemes. This relies on grouping the gadgets into \textit{extended rectangles}.

\begin{definition} [Extended rectangles]
    For every location in a logical circuit $Q$, the corresponding \emph{rectangle} in the encoded circuit $\mathcal{E}(Q)$ consists of that specific gadget followed by an error correction gadget (EC gadget). An \emph{extended rectangle}~(exRec) is defined as a rectangle along with the EC gadgets immediately preceding it. 
    
    For example, the extended rectangle of a two-qudit gate consists of two leading EC gadgets, the two-qudit gate gadget, and two trailing EC gadgets.
\end{definition}

This formalism allows us to analytically renormalize the noise under the error-corrected encoding. Specifically, the leading EC gadgets included in each exRec ensure that the input to encoded operations resides in a subspace with bounded errors while the trailing EC gadgets ensure that the output is error-corrected before passing to the next step. 
To anticipate these taking distinct values in our final proof, we introduce separate notation for the fault-tolerant code's correctable number of errors $\tEC$ and for the threshold number $\tbad$ of constituent faults above which an exRec is considered bad.
We define:
\begin{definition} [Goodness and Correctness] \label{def:goodness}
    For integer $s \geq 0$, an exRec is $s$-\emph{good} iff it contains at most $s$ faults among its constituent lower-level locations. Otherwise, it is $s$-\emph{bad}. Unless explicitly stated otherwise, good/bad means $\tbad$-good/$\tbad$-bad.
    
    An exRec is \emph{correct} iff it performs the intended ideal operations on the encoded qudits, i.e. iff an ideal decoder can be commuted through the rectangle:
    \begin{equation*}
        \Diagram{ \Rectangle{} \decoder } \quad = \quad \Diagram{ \decoder \rectangle{} }\quad .
    \end{equation*}
\end{definition}

One obtains the following Lemma by checking, for each possible gadget, that the decoder can be commuted through the respective rectangles via the correctness and propagation properties (see~\cite{gottesman2024} for details).
\begin{lemma}
    If the gadgets in an encoded circuit all adhere to the fault-tolerance conditions from Def.~\ref{def:FT-conditions}, then $\tEC$-goodness of exRecs implies correctness of the contained Rec.
\end{lemma}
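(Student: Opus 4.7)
The plan is to proceed by case analysis on the type of Rec. For each case I would use the algebraic identities from Def.~\ref{def:FT-conditions} to commute the ideal decoder through the Rec from right to left, keeping track of a filter projector whose parameter counts how many errors can have been introduced so far. The key invariant is that this parameter never exceeds the correctable threshold $\tEC$, which is guaranteed by goodness of the exRec.

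Fix an exRec with $s_L$ faults in its leading EC gadget(s), $s_G$ in the main gadget, and $s_T$ in the trailing EC, so that goodness reads $s_L + s_G + s_T \leq \tEC$. The EC propagation property (Eq.~\ref{eqn:EC-propagation-property}) applied to the leading EC implies that its output carries the filter of weight $s_L$, so proving correctness of the Rec (main gadget followed by $\mathrm{EC}_{s_T}$) on inputs coming from the preceding exRec reduces to verifying it on such filtered inputs.

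For a single-qudit gate Rec, I would start from the filtered input applied to $\mathrm{Gate}_{s_G}\!\circ\!\mathrm{EC}_{s_T}\!\circ\!\mathrm{dec}$. First, gate propagation inserts a filter of weight $s_L+s_G$ immediately after the gate. Next, EC correctness with $(r,s)=(s_L+s_G,\,s_T)$, valid since $r+s\leq\tEC$, replaces $\mathrm{EC}_{s_T}\!\circ\!\mathrm{dec}$ by $\mathrm{dec}$ and the intermediate filter becomes redundant. Finally, gate correctness with $(r,s)=(s_L,\,s_G)$ commutes $\mathrm{dec}$ through $\mathrm{Gate}_{s_G}$, producing the required identity that the Rec followed by decoding equals ideal decoding followed by the ideal gate. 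State preparation and measurement are handled by single applications of their respective correctness identities (the measurement Rec has no trailing EC, so the identity applies directly). The two-qudit gate case is analogous, tracking two separate input filter parameters $r_1,r_2$ and invoking the 2-qudit forms of propagation and correctness with the joint budget $r_1+r_2+s_G+s_T\leq\tEC$.

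The main obstacle is bookkeeping rather than insight. In the two-qudit gate case one must confirm that the filter parameters $r_1,r_2$ arising from the two distinct preceding exRecs are both consistent with the condition $r_1+r_2+s_G\leq\tEC$ appearing in the 2-qudit fault-tolerance properties. This in turn relies on the convention from Sec.~\ref{sec:conds-selfcorrecting-QCA} that block-crossing faults lying between a good and a bad exRec are attributed to the bad one, so that the fault budget of a good exRec genuinely accounts for all faults that can affect its Rec. Once that accounting is fixed, the chain of identities in Def.~\ref{def:FT-conditions} closes without further subtlety.
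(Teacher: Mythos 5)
Your proposal is correct and follows essentially the same route the paper intends: the paper only sketches this lemma as "commute the decoder through the Rec using the correctness and propagation properties," and your case analysis with the filter-weight bookkeeping ($s_L$ from EC propagation, then gate propagation, then EC correctness, then gate correctness, all within the budget $s_L+s_G+s_T\leq\tEC$) is exactly the standard exRec argument being invoked. No further comparison is needed.
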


It follows that if all exRecs in an encoded circuit are $\tEC$-good (or $\tbad$-good with $\tbad \leq \tEC$) then it produces the same output as the ideal logical circuit. 

For analyzing noise instances in which some exRecs are bad, we imagine that the decoder stores the syndromes of the measured check operators in a separate wire. This defines a unitary decoder (*-decoder)
$$\Diagram{\unitarydecoder} \quad. \vspace{3 ex}$$
We extend the  definitions of the fault-tolerance conditions and of good and correct exRecs to include the *\nobreakdash-decoder. 

Since the *-decoder is unitary, it has an inverse with which it can be inserted as identity at each step of the circuit:
$$\Diagram{\unitarydecoder \unitarydecoderinverse} \quad = \quad \Diagram{\Wire} \quad .$$
Thus, we can move the *\nobreakdash-decoder from output to input through an entire circuit that may contain both good and bad exRecs. 
For a good exRec, the corresponding rectangle implements the ideal logical location (up to an operation acting only on the syndrome registers carried by the *-decoder), whereas a bad exRec is treated as a single effective fault at the encoded level.

We now sketch how exRecs rescale local noise strength $\eta$ (as defined in Sec.~\ref{sec:conds-selfcorrecting-QCA}) under one level of encoding, following~\cite{aliferis2006}. 
For any set $F$ of $|F|$ locations inside an exRec, local noise implies that the sum of all fault-path operators with faults at all locations in $F$ has operator norm at most $\eta^{|F|}$.
Counting all relevant fault-paths with the inclusion-exclusion principle, we bound the sum of all fault-path contributions in an exRec with at least $\tEC + 1$ faults (i.e., the bad part of the exRec) as
$$\Bigg\|\sum_{F: |F| \geq \tEC + 1} E(F) \Bigg\| \leq A\,\eta^{\tEC+1},$$
where $A$ is a combinatorial constant depending only on the maximal number of locations within an exRec and not on the simulated circuit.
Thus, a level-1 exRec simulates the ideal logical location with an effective encoded noise strength that is upper-bounded by $A \eta^{\tEC+1}$.
It follows that repeated application of the fault-tolerant encoding scheme (concatenation) for $k$ levels decreases the logical noise strength superexponentially to 
$ \eta_\text{log} \leq \eta_{\text{th}} (\eta / \eta_\text{th})^{(\tEC+1)^k}$
whenever the physical noise strength $\eta$ is below a constant threshold $\eta_\text{th} = A^{-1/\tEC}$.
For an ideal circuit of depth $D$ on $N$ qubits, the norm of the sum of all bad fault-path contributions is bounded by $\bigO(N D \, \eta^{(k)})$, which upper-bounds the simulation error (in trace distance, hence also in total variation distance on measurement outcomes). Choosing
$ k = \bigO\!\left(\log\log(ND / \delta) \right) $
achieves logical accuracy $\delta$ with polylogarithmic overhead, proving Lem.~\ref{lem:threshold-thm}.

We will apply the exRec framework to the translation-invariant, self-simulating QCA from Sec.~\ref{sec:2D-definition}. One might notice that in this setting, neighboring exRecs share boundary operations (e.g. the structure correction) at \textit{all} spatial boundaries, even if the exRecs are participating in two distinct encoded gates. A single error on block-boundary idling gates may then affect the encoded state in both involved exRecs. 
This is not accounted for in the exRec formalism  of Refs.~\cite{aliferis2006, gottesman2024} where the precise boundaries of exRecs are assigned recursively for each given fault-path with the rule that whenever a candidate good exRec neighbors an exRec that is bad, the block-crossing locations are truncated from the candidate good exRec and included in the bad one, so any faulty block-crossing operation that lies between a good and a bad exRec is always counted in the bad macro-exRec. 

However, under local operations, the worst-case overhead from boundary faults is a constant factor in the rescaling exponential, and the exRec truncation can be omitted when the concatenated code has a sufficiently high distance:
We allow exRecs to overlap at the boundaries with each lower-level location belonging to at most $R$ distinct exRecs where $R$ is a constant depending only on the spatial locality/geometry of the gadget schedule. 
Then for any set $F$ of exRecs, any fault-path in which every exRec in $F$ is bad must contain at least $\lceil |F| (\tbad + 1)/R \rceil$ distinct lower-level faults (because each bad exRec contains at least $\tbad +1$ faulty lower-level locations and any single lower-level faulty location can contribute to the badness of at most $R$ exRecs). This yields a similar noise renormalization statement as the non-overlapping exRecs, with reduced renormalization power $(\tbad + 1)/R$ instead of $\tbad +1$. 
Comparing with the above, we obtain:
\begin{lemma}[Noise renormalization with overlaps] \label{lem:threshold-thm-2}
    Let $\mathcal{C}$ be an error-correcting code with fault-tolerant concatenated encoding scheme $\mathcal{E}$ that tolerates $\tEC$ errors per block (according to Def.~\ref{def:FT-conditions}).
    Suppose the scheme's gadgets are local with at most $R$ level-$l$ macro-sites interacting at each individual level-$l$ operation
    and assume local noise (Def.~\ref{def:local-noise}) of strength $\eta_l$ on level $l$.
    Fix $\tbad \leq \tEC$ and declare an exRec bad when it is $\tbad$-bad.
    
    Then, 
    considering bad exRecs (Def.~\ref{def:goodness}) as effective faults on the higher levels of encoding, 
    there exists a constant $A$ (depending only on the number of locations in one exRec) such that the induced level-$(l{+}1)$ local-noise strength obeys
    $ \eta_{l+1} \leq A \eta_l^{(\tbad +1)/ R}.$
\end{lemma}

\subsection{Proof of passive error correction for our scheme} \label{sec:proof-of-FT-conds}

Let us now analyze the system constructed in Sec.~\ref{sec:2D-definition}. 

\setcounter{theorem}{0}
\begin{theorem}[formal statement of the main result] \label{thm:self-correction}
    Consider the QCA $\R$ from Def.~\ref{def:EC-procedure} with parameters chosen as in Lem.~\ref{lem:self-consistent-parameters}, and assume local noise of strength $\eta$. 
    Assume the initial state is sufficiently close to the desired encoded initial state (whose program tape prescribes self-simulating interactions).

    Then there exists a constant threshold $\eta_{\mathrm{th}}$ (independent of the simulated circuit) such that for all $\eta < \eta_{\mathrm{th}}$, the following holds:
    Let $Q$ be any finite logical computation and compile it into a translation-invariant circuit of depth $D$ on $N$ qubits. Then $Q$ is simulated by $\R$ to within logical accuracy $\delta$ in a finite-sized system with largest encoding level $k = \bigO (\log \log (ND / \delta) )$. Since each simulation level increases spatial size and runtime only by constant factors, the simulation uses $\bigO(N \polylog(ND / \delta))$ physical qudits and total physical runtime $\bigO(D \polylog(ND / \delta))$.
\end{theorem}

We prove the passive QEC capabilities of our scheme via the sufficient conditions established above. 
Throughout the proof, a \textit{macro-location} refers to the $M \times M \times T_0$\nobreakdash-sized spacetime block which implements a single encoded application of the universal update $\Univ$. A full simulated fault-tolerant update (`one exRec') of the QCA $\R$ (whose brickwork updates are nearest-neighbor local and thus each supported on a square of $2 \times 2$ sites on the two-dimensional square lattice) consists of $T_\mathrm{sim}$ consecutive $2 \times 2$ squares of such macro-locations.
Since the local QCA interactions are nearest-neighbor-local on the square lattice, a lower-level location can belong to at most $R = 4$ overlapping macro-locations. 
For simplicity, one may thus choose the badness threshold to be $\tbad = 2 R - 1 = 7$, so that the renormalization exponent $\alpha = (\tbad + 1)/R = 2$ in Prop.~\ref{prop:QCA-threshold}.
(In relation to Sec.~\ref{sec:conds-selfcorrecting-QCA}, note that in our setting $\tbad$ differs from the number $\tEC$ of errors which the embedded concatenated code must be able to correct. Below, we will set $\tEC$ to a constant larger than $\tbad$, so that a good higher-level exRec can correct not only the effective faults directly arising from fresh badness events within it, but also those that stem from a bounded number of contributing lower-level locations which can have incorrect structure/program information and which thus induce additional simulated data-update faults.)
Comparing with the criteria in Def.~\ref{def:self-correcting-QCA}, the \textit{self-simulation} property is directly built into our update rule/Def.~\ref{def:EC-procedure}.
Thus the only remaining step for proving that our scheme constitutes a self-correcting QCA is to prove the \textit{fault-tolerance} properties (a) Correctness outside the damage region, and
(b) Sparsity of damaged sites.
We now address these.

Recall that the structure layer stores local coordinates $(\tau,x,y)$. We say that a site $(i,j)$ is \textit{singular at time $t$} iff its local structure state disagrees with the ideal trajectory, i.e. iff $(\tau(t,i,j),x(t,i,j),y(t,i,j))\neq (t\bmod T_0,\; i\bmod M,\; j\bmod M)$.
Let us analyze Toom's rule. We use north-east-oriented triangles 
$$\Delta(a,b,c) := \{(i,j)\in \mathbb{Z}^2: i \geq -a, j \geq -b, i+j < c\}$$
and define the triangle norm $|\Delta(a,b,c)|_{\Delta} := a + b + c$.
We extend the definition to other sets $S$ of lattice sites by setting $|S|_\Delta$ to the minimal value attainable by covering $S$ with disjoint triangles and summing over their sizes.

\begin{fact}[Toom's rule shrinks triangles of errors] \label{fact:triangle-shrinkage}
    Suppose at time $t$ all singular points in the structure state $\s(t,\cdot)$ are 
    in the strict interior of a triangle $\Delta(a,b,c)$. 
    Then after one application of $\Toom$, in the absence of new faults, the set of singular points is contained in $\Delta(a,b,c-1)$.
    The same property holds for disjoint unions of triangles.
\end{fact}

This is the basic error-correcting property of $\Toom$~\cite{toom1980}.
(To prove it for the periodic structure registers, define $\delta \tau(t,i,j):= \tau(t,i,j) - t \bmod{T_0}$, $\delta x(t,i,j):= x(t,i,j) - i \bmod{M}$, $\delta y(t,i,j):= y(t,i,j) - j \bmod{M}$. Def.~\ref{def:Toom-structural} then reduces to standard Toom majority dynamics on $(\delta \tau, \delta x, \delta y)$, so the usual erosion arguments apply unchanged.)
By definition, singular points $(i,j)$ can only remain singular if the majority of their neighborhood $\{ (i,j), (i, j+1), (i+1,j) \}$ is singular. The north-east boundary of a triangle surrounding a region of errors thus recedes inwards. 

For a triangle $\Delta(a,b,c)$ and an integer $s \geq 0$, we define its deflation $D(\Delta(a,b,c),s):= \Delta(a,b,c-s)$ as the new triangle obtained after $s$ shrinkage steps. (If $c-s < -a-b$ then $D(\Delta(a,b,c), s)$ is empty.)
For a family $\mathcal{J}$ of disjoint triangles, we write $D(\mathcal{J},s) := \bigcup_{\Delta \in \mathcal{J}} D(\Delta, s)$.

We analyze the effect of noise by tracking the \textit{damage set} from Def.~\ref{def:self-correcting-QCA} for the specific QCA $\R$ of Sec.~\ref{sec:2D-definition}. The damage set is defined separately for each simulation level based on that level's initial configuration and bad macro-locations. Damage sets are updated at the macro-times $qT_0$ that delimit individual EC-cycles/macro-locations. $\Dam_l(t)$ will be an overestimate, not the exact set of singular macro-sites. The definition is chosen so that bad locations cause a local contribution to the damaged region, whereas in noiseless space-time regions the damage set shrinks under Toom's rule.

Fix a simulation level $l$. 
We label macro-locations by the output level of the decoder, so a level-$l$ operation is implemented by several level-$l$ macro-locations, each consisting of $M \times M \times T_0$ level-$l{-}1$ operations.
We work at the corresponding coarse-grained description (level-$l$ cells and macro-times $t \in T_0 \mathbb{Z}$) and write $\mathcal{N}_l$ for the set of bad level-$l$ $T_0$-macro-locations (Def.~\ref{def:self-correcting-QCA}.2). By Secs.~\ref{sec:exRec-review} and~\ref{sec:conds-selfcorrecting-QCA}, the induced level-$l$ noise is local of strength $\eta_l$.
We partition the lattice into disjoint \textit{clusters} of $T_0 \times T_0$ level-$l$ sites. For a cluster $C$, let $C^+$ denote the union of $3 \times 3$ clusters centered at $C$. This is the relevant spatial \textit{neighborhood} which may affect $C$ during a length-$T_0$ period of updates.

We first specify some constants that will appear in the proofs.
Let $c$ denote the maximum number of level-$l$ macro-locations which can be newly declared damaged, during one $T_0$-cycle, because of one bad level-$l$ macro-location. Comparing with Def.~\ref{def:damage-set} below, $c$ is given by the maximal number of $M\times M \times T_0$-sized macro-locations in a $3 \times 3$ neighborhood of $T_0 \times T_0$-sized clusters during macro-time-duration $T_0$, i.e. $c = (3 \lceil T_0 / M \rceil)^2$. With the parameter choice of Lem.~\ref{lem:self-consistent-parameters}, this $c$ is an absolute finite constant independent of $M,\ T_\mathrm{sim}$, the simulation level, the system size or the simulated circuit. We fix $c = (3 \cdot 4 )^2$ before choosing the other constants.
Further, we set
$$ r := \tbad c $$
so that inside any good higher-level macro-location, the lower-level badness events can generate at most $r$ newly damaged sites. Let $\tilde{r} = r + (r-1) + \hdots 1$ bound the total number of damaged space-time locations which appear when such a set of damaged sites is eroded under Toom's rule.
Let $w$ be the fixed local spread of the implemented gadget schedule, so that one effectively faulty update (which could e.g. be the result of one damaged space-time location with incorrect structure or program information) can change the prescribed channel of at most $w$ locations in the higher-level gadget schedule (including the program-correction gadgets). The constant $w$ depends only on the chosen local interaction geometry, and not on $M,\ T_\mathrm{sim}$, the simulation level, macro-time, system size or the simulated circuit. 
We choose the concatenated code to correct $$\tEC := (w + 1)\tilde{r}$$ faults.
Once $\tEC$ is fixed, Lem.~\ref{lem:concat-code} yields $M_\mathrm{code}$ and $T_\mathrm{code}$. We then choose $M$ as a multiple of $M_\mathrm{code}$ large enough for the self-simulation in Lem.~\ref{lem:self-consistent-parameters} and also large enough that
$M \geq 3 r + 3 T_\mathrm{code} + 3.$
Finally, we set $$T_\mathrm{str-ref}:= 2M + 2r + \tbad c + 1 = 2M + 3r + 1 .$$
Inserting into Def.~\ref{def:EC-procedure} gives $T_0 = T_\mathrm{str-ref}+M+2+3 T_\mathrm{code} = 3M+3r+3 T_\mathrm{code}+3$ and the above lower bound on $M$ implies $\lceil T_0 / M \rceil \leq 4$ (which indeed agrees with the absolutely constant choice $c = (3\cdot 4)^2$). 
The above are thus all finite requirements compatible with the self-consistent parameter choice described in Lem.~\ref{lem:self-consistent-parameters}.

We call a level-$l$ macro-location \textit{correct} if, after applying a joint decoder to its input and output blocks, it realizes the ideal level-$(l{+}1)$ local update (i.e. the desired encoded $\Univ$ channel) on the decoded higher-level state. 
Here, the \textit{joint decoder} acts on the current-level structure and data registers of an $M \times M$ block at a macro-time boundary as follows:
First, compare each site's structure registers with the ideal structural codeword, and, whenever they disagree, destroy the local data (including the program information) by replacing with a fixed erasure state. Then discard the current-level structure registers and apply the standard decoder of the concatenated code to the remaining data block. 
This exactly captures the relevant states viewed by higher levels in our correction scheme, where simulation errors may arise not only from the data layer storing the higher-level states but also from the structure layer that orchestrates the simulating operations.

\begin{definition} [Damage set] \label{def:damage-set}
    The damage set $\Dam_l(t)$ is a set of macro-locations at macro-times $t \in T_0 \mathbb{Z}$. 
    
    Initialize $\Dam_l(0)$ to be the set of level-$l$ locations whose decoded level-$l$ initial state has an error on the structure or data registers. Anticipating the requirements in Lem.~\ref{lem:correctness-outside-damage} below, the decoder is chosen recursively so that this also includes all level-$l$ sites which contain more than $r$ damaged level-$l{-}1$ locations.
    
    For each subsequent macro-time $t = q T_0$ and each cluster $C$, define $\Dam_l((q+1)T_0)\cap C$ as follows:
    \begin{itemize}
        \item If $\mathcal{N}_l \cap ([qT_0, (q+1)T_0)\times C^+) \neq \emptyset ,$ set $${\Dam_l((q+1) T_0) \cap C} := C .$$
        \item Otherwise, let $\mathcal{J}_q(C)$ be a set of disjoint triangles of minimal total triangle norm such that 
        $\Dam_l(qT_0) \cap C^+ \subseteq D(\mathcal{J}_q(C),T_0) $ and set
        $$\quad \Dam_l((q+1) T_0) \cap C := D(\mathcal{J}_q(C),T_0 + M) \cap C .$$
        (If there are multiple minimal choices, set $\mathcal{J}_q(C)$ by an arbitrary but deterministic rule, e.g. using a fixed ordering of triangles on the lattice. This makes $\Dam_l(t)$ a deterministic function of $\mathcal{N}_l$ and the initial configuration.)
    \end{itemize}
    Thus, in any cluster whose neighborhood is macro-noise-free over one macro-period, the triangle cover of the damage set is deflated by at least $M$ (i.e. by one macro-site).
    
    We write $\Dam_l := \{(t,i,j): t \in T_0\mathbb{Z}, (i,j) \in \Dam_l(t) \}$ for the spacetime set of damaged macro-sites.
\end{definition}

With this, we show the fault-tolerance properties required in Def.~\ref{def:self-correcting-QCA}:
\begin{lemma} [Correctness outside the damage set] \label{lem:correctness-outside-damage}
    The damage sets $\Dam_l(t)$ satisfy Def.~\ref{def:self-correcting-QCA}.2(a), i.e. the decoded level-$l$ evolution implements the ideal noiseless operation on each exRec outside $\Dam_l(t)$.
\end{lemma}
\begin{proof}
    We prove the following stronger inductive claim: 
    Consider macro-time steps $t = qT_0$.
    Let $X$ be a level-$l{+}1$ macro-location on the interval $[qT_0, (q+1) T_0)$. Suppose that the output level-$l{+}1$ site of $X$ is outside ${\Dam_{l+1}((q+1)T_0)}$.
    \begin{samepage}
    Then:
    \begin{itemize}[label=-]
        \item during the simulation phase, at times $[qT_0 + T_\mathrm{ref}, (q+1) T_0)$ in $X$, at each time-step there are at most $r$ damaged level-$l$ macro-locations;
        \item the decoded operation implemented by $X$ is exactly the ideal noiseless level-$l{+}1$ operation corresponding to one encoded application of $\Univ$;
        \item and, at time $(q+1) T_0$, all level-$l$ sites in $X$ outside $\Dam_l((q+1)T_0)$ have the ideal decoded control (=structure and program) state, meaning the ideal structure-layer state and the ideal encoded program symbols.
    \end{itemize}
    \end{samepage}
    
    We prove the claim by induction over $l$, $q$. 
    
    At macro-time $q = 0$, the number of lower-level damaged sites which an undamaged macro-location may contain is bounded by the recursive decoder that determines the initial damage sets. 
    The definition of $\Dam_l(0)$ already includes every site whose decoded level-$l$ state differs from the ideal configuration or which contains more than $r$ lower-level damaged constituents,  and outside $\Dam_l(0)$ the decoded control and data are correct. So the claim is satisfied at $q = 0$.
    
    Assume now the claim for all lower levels and for the level $l+1$ up to time $qT_0$. Take a level-$l{+}1$ macro-location $X$ in a macro-noise-free cluster neighborhood $C^+$ and suppose that its output is outside $\Dam_{l+1}((q+1)T_0)$.
    Since $C^+$ must hence (by Def.~\ref{def:damage-set}) be macro-noise-free, every level-$l{+}1$ macro-location in $[qT_0,(q+1)T_0)\times C^+$ contains at most $\tbad$ bad constituent level-$l$ locations. However, the level-$l$ operations inside $X$ can be incorrect (i.e. can fail to perform the intended simulated operation) not only if they are bad, but also if their control information is outside the undamaged region. We now bound the number of such locations.

    At the beginning of the refresh phase (by the inductive claim from the previous macro-time-step) the prior EC-cycle of each macro-location outside $\Dam_{l+1}(qT_0)$ ended with a simulation phase with at most $r$ outgoing damaged level-$l$ sites. If the macro-location $X$ under consideration is outside $\Dam_{l+1}$ at time $(q+1) T_0$, then (by Def.~\ref{def:damage-set}) at time $qT_0$ it either was already outside $\Dam_{l+1}$, or it was at the boundary of the damaged region, i.e. $X$ was in $\Dam_{l+1}$ but its north and east neighboring macro-locations were both outside $\Dam_{l+1}$. This means that the triangle norm of the largest triangle in the incoming level-$l$ damage-cover within the cluster around $X$ at time $q T_0$ is bounded by $2M + 2r$ (because, at worst, such a triangle can arise from one damaged macro-location contributing norm $\leq 2M$ and its undamaged neighboring macro-locations contributing norm $\leq 2r$, see Fig.~\ref{fig:damage-shrinkage}). Further, again by the inductive claim from the preceding macro-time-step, the incoming damage region at time $q T_0$ contains all control defects.

    \begin{figure*}[tb]
        \centering
        \includegraphics[width=0.95\textwidth]{figure-damage-shrinkage.pdf}
        \caption{\textbf{Correct operations outside the damage set.}
        We prove that simulated operations implemented by undamaged locations are correct. The Figure depicts an example configuration where a level-$l{+}1$ macro-location that is damaged at time $qT_0$ becomes undamaged (in the level-$l{+}1$ sense) at time $(q+1) T_0$. Because the surrounding block in north-east-direction of such a macro-location must have been outside $\Dam_{l+1}$ at time $q T_0$, the incoming set of level-$l$ damaged sites is covered by a triangle of bounded norm (see inset on the left).  
        The proof analyzes the shrinkage of these lower-level (level $l$) damaged cells under Toom's rule within a `good' error-correction cycle that contains few coarse-grained faults, i.e. few level-$l$ badness events. 
        } 
        \label{fig:damage-shrinkage}
    \end{figure*}

    Consider now the structural refresh. During the first $T_\mathrm{str-ref}$ steps of Def.~\ref{def:EC-procedure}, only the structure registers are acted on. Outside damaged level-$l$ macro-locations the level-$l$ operations are correct (by the induction hypothesis from the $l$-th level). Thus the structure layer follows Toom's rule outside the lower-level damaged set (which, by Def.~\ref{def:damage-set}, in turn also shrinks with Toom's geometry). 
    In the absence of new bad level-$l$ macro-locations, Fact~\ref{fact:triangle-shrinkage} would shrink the inherited triangle cover with unit speed. The only interruptions are caused by the at most $\tbad$ level-$l$ badness events in $X$. By the definition of the constant $c$, each such bad level-$l$ macro-location yields at most $c$ newly damaged level-$l$ macro-sites. 
    Equivalently, the shrinkage budget is reduced by at most $\tbad c = r$. Since $T_\mathrm{str-ref} > 2M + 2r + \tbad c$, the inherited damage and structural defects are completely eroded after $T_\mathrm{str-ref}$ steps. At that time, the only remaining structure defects are hence those belonging to the fresh bursts caused by the new badness events in the current $T_0$-cycle, and there are at most $r$ 
    such damaged level-$l$ macro-locations. (The same shrinkage holds for the set $\Dam_l$ itself, so at each time in $[qT_0 + T_\mathrm{str-ref}, (q+1) T_0)$ there are at most $r$ damaged level-$l$ locations in $X$. This implies the first part of the inductive claim.)

    We now analyze the program track. At the start of the program-refresh-sequence, the structure registers are correct everywhere in $X$ except on the freshly damaged sites just described. The leading program-EC gadget corrects the repetition code block of every program symbol whose input block contains at most $\tEC$ errors. With $\tEC = (w + 1) \tilde{r} > r$, this implies that all program symbols in initially undamaged macro-locations are correct.
    The copy/shift/majority-sequence then enacts a Toom-update on each site using the appropriate distance-$M$ program registers from the neighboring blocks. In the ideal case this is exactly one blockwise Toom-update on the program layer, and thus removes inherited program defects from macro-locations on the boundary of the damage set. The only faulty operations in the encoded Toom-update are those supported on the fresh damaged control sites (stemming from the control variables, e.g. $\tau$, being used to coordinate the program-refresh-sequence) together with their bounded local spread. By the definition of $w$ and $\tilde{r}$, these account for at most $w \tilde{r}$ faulty locations in the program-refresh gadget. The trailing program-EC gadget corrects these. Thus, after the program refresh, the encoded program state is exactly the ideal Toom-refreshed program state (outside the prescribed damage set), and inside $X$ the only possible remaining program defects are supported on the same set of $\leq r$ freshly damaged level-$l$ macro-locations.

    It remains to establish correctness of the encoded $\Univ$ operation in the simulation phase. At the beginning of the simulation phase, both the structure and program control states are correct except on at most $r$ damaged level-$l$ macro-locations. We treat every data operation supported on one of the damaged sites as faulty. In this picture, structure and program defects cause transient effective data faults. Direct faults arising from the bad level-$l$ locations are included in the same damaged set. Since one damaged control location can alter at most $w$ operations in the gadget schedule, the simulation contains at most $w \tilde{r}$ effective faulty gadget locations. Above, we chose the concatenated code to have sufficiently high distance to tolerate $\tEC \geq \tilde{r} + w \tilde{r}$ faults.
    The fault-tolerance properties (Def.~\ref{def:FT-conditions}) of $\Concat'$ imply that the encoded $\Univ$ gadget of $X$ is correct after decoding; that is, the decoded output at time $(q+1) T_0$ is exactly the ideal level-$l{+}1$ $\Univ$-update applied to the decoded input from time $qT_0$. Noting that control-state correctness outside the damage set holds at time $(q+1)T_0$ (because the damage set is defined to precisely contain the fresh control defects that may arise from the new badness events after the inherited defects have eroded, as shown above), this proves the inductive claim. 
    
    Finally, since every full length-$T$ update of the simulated QCA $\R$ consists of $T_\mathrm{sim}$ consecutive length-$T_0$ macro-locations, correctness of all encoded $\Univ$-operations in macro-locations outside damage implies correctness of the decoded operations at the endpoints required in Def.~\ref{def:self-correcting-QCA}.2(a).
\end{proof}

\begin{lemma} [Sparsity of the damage set] \label{lem:sparsity-of-damage}
    The damage sets $\Dam_l(t)$ satisfy Def.~\ref{def:self-correcting-QCA}.2(b), i.e. there exists a constant $B$ such that for every finite set $V$ of level-$l$ macro-locations,
    $\| E_l^\mathrm{dam}(V) \| \leq B |V| \eta_l$, assuming $\Dam_l(0) = \emptyset$.
\end{lemma}
\begin{proof}
    Consider a simulation level $l$ and a finite set $V$ of level-$l$ macro-locations. We bound the fault-path contribution for the event that a fixed $v \in V$ lies in $\Dam_l$. 
    By Def.~\ref{def:damage-set}, a site is in $\Dam_l(t)$ only if either its cluster was declared damaged due to a bad level-$l$ location occurring in the bounded cluster neighborhood around $v$, or it was already inside the damaged triangle-cover at the preceding macro-time and did not fully deflate away during the EC-procedure. 
    In any macro-noise-free neighborhood, the triangle cover deflates by at least $M$, i.e. by at least one block, per macro-period. 

    To turn this into a counting argument, we recursively trace the condition $v \in \Dam_l$ backwards in time. At each step, either the damage was freshly created via the first clause of Def.~\ref{def:damage-set} (i.e. via a bad macro-location in the surrounding cluster), or the cluster neighborhood is macro-noise-free and $v$ must be \textit{excused} by earlier damage (at least two damaged predecessors in the corresponding backward Toom-neighborhood). We collect all such possible justifications in a graph whose nodes are level-$l$ macro-locations and whose edges connect a macro-location to the macro-locations from the previous time-step that can participate in such an excuse (`arrows' in Ref.~\cite{gacs2021}) and connect same-time macro-locations that can occur together in one excuse (`forks' in Ref.~\cite{gacs2021}). This graph has bounded degree because each cluster neighborhood $C^+$ contains only $\bigO(1)$ macro-sites. Recursively choosing excuses and retaining a minimal connected subgraph yields an \textit{explanation tree} rooted at $v$ with leaves corresponding to bad macro-locations. By the standard bound for Toom's rule dynamics~\cite{gacs2021, berman1988}, the number of possible such explanation trees originating from $b$ witnessing bad macro-locations is at most $c_{\mathrm{Toom}}^b$ for some constant $c_{\mathrm{Toom}}$.

    Using the level-$l$ local noise bound (Def.~\ref{def:local-noise}), the fault-path sum associated with any fixed set of $b$ witnessing bad macro-locations has norm at most $\eta_l^b$. 
    Summing over the possible explanation trees for damage at $v$ gives
    $$ \sum_{F: v \in \Dam_l} \| E(F) \| \leq \sum_{b \geq 1} c_{\mathrm{Toom}}^b \eta_l^b = \frac{c_{\mathrm{Toom}} \eta_l}{1- c_{\mathrm{Toom}} \eta_l} $$
    which is $\bigO(\eta_l)$ for sufficiently small $\eta_l$. 
    Finally, by union bound over $v \in V$, 
    \begin{align*}
        \| E_l^\mathrm{dam}(V) \|
        &= \left\| \sum_{F: \Dam_l \cap V \neq \emptyset} E(F) \right\|
        \leq \sum_{F: \Dam_l \cap V \neq \emptyset} \left\| E(F) \right\|
        \\ &\leq \sum_{v \in V} \sum_{F: v \in \Dam_l} \| E(F) \|
        % \leq \sum_{v \in V} \| E_l^\mathrm{dam}(v) \| 
        \\ &\leq B |V| \eta_l
    \end{align*}
    for a suitable constant $B$.
\end{proof}
This completes the verification of Def.~\ref{def:self-correcting-QCA}. 
The self-correcting threshold theorem (Thm.~\ref{thm:self-correction}) for our QCA follows from Prop.~\ref{prop:QCA-threshold} applied to $\R$ with the above damage sets.

The proofs have so far assumed evolution on \textit{periodic boundary conditions}. 
We remark that the construction can be directly extended to \textit{open boundary conditions} as follows: We store four copies of the periodic-boundary-cells at each site and update each such layer according to the periodic-boundary transition rules, while making the update rule `fold over' at the edges to connect the four copies. This can be achieved in a translation-invariant system by conditioning on the locally detectable open boundary (detectable via absence of neighboring same-layer cells) and re-directing interactions to the appropriate layers as needed. The system then emulates a two-dimensional projection of a torus.

\subsection{Starting and stopping a self-correcting computation} \label{sec:computation-setup}

We have seen that the self-simulating transition rule $\R$ can fault-tolerantly encode an arbitrary quantum computation. This Section addresses the practical issue of initialization and read-out.

Let us analyze the required \textit{initial state}. 
The self-correction of the automaton relies on the hierarchical organization that ensures error correction happens on all levels of the code, and as such it works autonomously only if the evolution starts in a state that is already encoded in the desired code. For the structure layer this demands a product state and for the data layer it depends on the specific concatenated code used via Lem.~\ref{lem:concat-code}. Magic-state distillation and verification protocols can be used to prepare these states~\cite{knill2004, bravyi2005} (and, if necessary, it may even be possible to control the preparation protocol within the self-correcting organization by carrying the corresponding instructions and required ancillas along with the circuit-scheme through all levels of the concatenation hierarchy). 
Because the one-level encoder of such states has constant depth and can be applied recursively, the parallel depth required to prepare the top level, $k$-times encoded initial state is $\bigO(\exp(k))$. There are doubly-logarithmically many concatenation levels in total (i.e. $k = \bigO(\log\log(N D / \delta))$ in terms of the target accuracy $\delta$ and the number of locations $N D$ in the desired circuit), so this yields polylogarithmic preparation depth for the fully encoded initial state.

Once the desired initial state is prepared, we evolve under the self-correcting transition rule in order to execute the desired \textit{computation}. After some number of steps, the highest-level encoded circuit will have undergone exactly one round of the simulated logical circuit. At that point, the computation is finished and one should measure the final state and decode classically to obtain the computed \textit{output}. Alternatively, one can switch to different interactions so that the system stores the circuit's final state within its capacity as a quantum memory (see below). 

To circumvent the need for externally timing the precise endpoint of the encoded computation, the distributed program of the intended logical circuit can initiate such a switch in the final layer of operations as follows: The simulation of higher levels should be executed regardless of the current-level switch. 
Note that the logical output is obtained solely from the highest decoding level and thus remains unaffected by faulty interaction switches in the lower simulation levels.
To improve efficiency, one could moreover drop the circuit execution entirely from such levels and keep it only on the highest level of the automaton. That level can be recognized, even in the periodic boundary setting, using a `top-level' marker which is removed whenever the adjacent cells do not carry the same marking, and which, after correct initialization and barring extensive concurrent faults, thus cannot persist anywhere except on the highest level (since that is the only level where cells have no neighbors other than themselves, cf.~\cite{gacs1986}).

Finally, we note that the same transition rules can also be used to build a self-correcting quantum \textit{memory} by setting the simulated circuit to consist solely of identity gates.
Stopping the computation is not an issue in that case since there is no disadvantage in executing this `circuit' several times. In the thermodynamic limit this system can remember quantum states indefinitely. On any finite system, the memory time will be determined by the logical noise within the largest level of encoding that still fits inside the boundaries.

\section{Continuous time: Self-correcting Lindbladian} \label{sec:async-scheme}

The protocol above yields a self-correcting QCA under globally synchronized, discrete-time updates.
We now describe an implementation in continuous time whose trajectories realize a version of the same automaton that allows for distant regions of the lattice to advance asynchronously.
The required dynamics are still local, translation-invariant and time-independent. We show robustness under perturbation by arbitrary local unitary noise jumps of sufficiently small total rate.

\subsection{Passive error correction in asynchronous trajectories} \label{sec:async-scheme-discrete}

We first define an \textit{asynchronous, discrete-time} automaton $\widetilde{\R}$ that implements the same level-to-level self-simulation as the synchronous rule $\R$ from Def.~\ref{def:EC-procedure}. The point is that the updates can be constrained to preserve causality while incurring only a constant slowdown.
Intuitively, synchronized evolution can be emulated within asynchronous settings by restricting the updates according to a `marching soldier' rule~\cite{nehaniv2004, gacs1986}. 
This requires a local mechanism which keeps track of how many updates have been performed at a site compared to its nearest neighbors and which ensures that evolution proceeds only in causality-preserving order, e.g. cells never advance more than one step ahead of their neighbors.

Because the control registers in our scheme (i.e. the structure and program variables) remain classical throughout the ideal evolution, we can dephase into the computational basis and copy these registers locally without affecting the decoded quantum state. 
To implement a marching soldier update mechanism, we enlarge each cell by a buffer copy of the control space. Whenever a local asynchronous update uses the control state on its support, it first copies the current state of these variables into the buffer and then updates according to the synchronous rule. Thus neighboring supports can read either the current or the one-step-old control state, so every update sees a concurrent slice of its structural neighborhood.
For the data registers, we note that the operations of each synchronous step of $\R$ can be decomposed into a layered, brickwork-geometry pattern of disjoint local channels. Asynchronous reordering within any single temporal brick-layer is irrelevant since the disjoint channels commute. As a result, we do not need to store several copies of the quantum data to support causality-preserving updates.

We synchronize the evolution within each finite, $\bigO(M)$-sized region that simulates an individual higher-level operation. 
For every local operation in a brickwork decomposition of $\R$ we introduce a finite clock register $\kappa \in \mathbb Z_{2 M^+}$ where $M^+$ is the size of one such \textit{synchronization neighborhood}, consisting of all current-level cells contributing to one simulated update on the level above. Since $M$ is fixed, this enlarges the onsite space (and thus the dependencies in Lem.~\ref{lem:self-consistent-parameters}) by only a constant factor. An attempted asynchronous update is successful exactly if its incident clock remains \textit{valid} (meaning that all sites whose support overlaps in the brickwork decomposition are synchronized up to $\kappa$-difference $0, \pm 1$) after increasing $\kappa$ by one. 
To be specific, when a successful update fires at macro-location $(t,i,j)$, it applies the corresponding brickwork channel to the quantum data on the support of $(t,i,j)$, copies the control states on that support, updates the control registers exactly as in the synchronous rule and increments the local $\kappa$. With this, every completely valid execution of one macro-step induces exactly the same decoded higher-level update as the synchronous circuit.

Faults may push the clock registers outside the valid subspace. In fact, in two dimensions, the naive marching soldier scheme with finite-valued clocks is not robust because locally valid clock values can carry nonzero lag around loops that have no valid corresponding global update-counter-field and deadlock the dynamics~\cite{cook2008}. Following Ref.~\cite{wang1991}, we avoid this issue by synchronizing only within finite support neighborhoods of individual macro-locations and concatenating/self-simulating afterwards. Because the synchronization neighborhood diameter is $M^+$, every valid modulo-$2M^+$ clock field in one such synchronization neighborhood has a unique, fully-valid corresponding field of absolute update counters.

To correct the thereby detectable clock errors, we introduce a local repair channel that establishes consistent clock registers throughout the update neighborhood. Ref.~\cite{wang1991} gives an example of such a repair mechanism that purges local regions around clock inconsistencies and subsequently heals them in a reliable manner. For our construction, we use an alternative method that fits rather conveniently into our existing proofs via a consensus-finding variant of Toom's rule applied to the clock registers:
If the local north-east-center neighborhood of a site has invalid clock states, then the clock state at the center is replaced by the maximal value agreeing with the north-east neighbors up to $\pm 1$ (the maximum is taken over the respective least non-negative residues modulo $2M^+$); if no value is compatible with both north-east neighbors the repair channel resets the center clock to a fixed value and the site is treated as damaged. In each update, this step is repeated twice and followed by an inflation step in which clock registers transition to the maximal value of their south-west-neighborhood, unless already within $\pm 1$ of this value. As shown in Ref.~\cite{gacs1989} by noting that non-wrapping/contractible error regions are reliably diminished, the evolution under this rule reaches consensus in $\bigO(M^+)$ steps, meaning that even a completely erroneous clock configuration returns to the valid subspace within finite time.
With this, finite islands of clock-errors shrink under the same north-east-triangle geometry as standard Toom's rule correction. 

We may therefore absorb erroneous clock registers into the triangle-cover formalism used in Secs.~\ref{sec:conds-selfcorrecting-QCA},~\ref{sec:proof-of-FT-conds}. Outside the damaged regions, the asynchronous evolution on each synchronization neighborhood is a valid interleaving of the finite brickwork circuit simulating one macro-step of $\R$. We note that neighboring macro-cells need not finish these steps simultaneously; instead the simulated higher-level dynamics are again asynchronous, precisely implementing a recursive self-simulation of $\widetilde{\R}$ at the next scale.
The correctness statement of Lem.~\ref{lem:correctness-outside-damage} extends verbatim after enlarging the damage set by the effective faults (badness events) affecting the clock registers.

We emphasize that the asynchronous trajectory does not generally contain a full simulated time slice at any one physical instant. Instead, the time slices are represented as cuts through the local clock field that are spread over physical time. During the circuit-execution, an instantaneous physical configuration should thus not be interpreted as a state of the simulated synchronous circuit. In our setting, this does not prevent read-out, because we simulate the desired finite-depth circuit and then switch the top-level program to memory dynamics (see Sec.~\ref{sec:computation-setup}). Once all asynchronous parts (synchronization neighborhoods) have advanced beyond the macro-steps corresponding to the final circuit layer, later asynchronous updates only maintain the output state. Subsequent physical readout followed by top-level decoding therefore gives the desired logical output (except on the threshold-suppressed bad fault paths).

\subsection{Discrete trajectories in continuous-time dynamics} \label{sec:Lindbladian-correction}

Let us now extend these results to a notion of passive error correction in \textit{continuous time}.

Consider an open quantum system whose evolution $\dot{\rho} = \mathcal{L}(\rho)$ is governed by a Lindbladian 
$   \mathcal{L}(\rho) = 
    \sum_\alpha \gamma_\alpha \left( 
                    \Phi_\alpha(\rho) - \rho
                \right) 
$
built from local, completely positive, trace-preserving update channels $\Phi_\alpha$. 
In our case, these channels have constant rates $\gamma_\alpha = 1$ and are given by the local attempted correction updates that define the asynchronous self-correcting automaton $\widetilde{\R}$ (including local dissipative resets that reinitialize temporary ancillas or empty data sites).
Each $\Phi_\alpha$ acts on $\bigO(1)$ neighboring cells and is a translate of one of finitely many update channels (including the clock-repair mechanism). The resulting Lindbladian is local, time-independent and translation-invariant.
For related sufficient conditions for autonomous protection by engineered dissipation, see Ref.~\cite{lihm2018}. For the present construction, we instead establish robustness through the recursive self-simulation and fault-path renormalization developed in the discrete-time picture above.
For any given time interval, expectation values of the final state can be computed by averaging over an ensemble of \textit{trajectories} where the probability of each trajectory depends on the update events along the taken path. 
Writing $\Phi_\alpha (\rho) = \sum_s K_{\alpha,s} \rho K^\dagger_{\alpha, s}$ with $\sum_s K^\dagger_{\alpha, s} K_{\alpha,s} = \mathbb{I}\; \forall \alpha$, this yields independent Poisson processes with constant attempt rates for each jump $\alpha$, independent of the system state.

Let us introduce noise in this setting. The noise is realized by a perturbation $\delta \mathcal{L} $ on the Lindbladian.
We adopt an error model 
$\delta \mathcal{L}(\rho) = \sum_{(i,j)} \sum_\beta \mu_\beta ( N_\beta^{(i,j)} \rho N_\beta^{(i,j) \dagger} - \rho )$ 
with \textit{local unitary noise jumps} $N_\beta^{(i,j)}$ (that may depend on the site $(i,j)$) and with total rate $p = \sum_\beta \mu_\beta$. This encompasses, for example, local Pauli noise. 
The trajectories are then described by a stochastic process with independent Poisson processes for attempted correction/repair jumps and for noise-jumps.

To complete the analysis, we bound the waiting time between successive \textit{successful} correction updates.
Recall that an attempted correction update at site $(i,j)$ is accepted if and only if it satisfies the marching soldier constraint introduced in Sec.~\ref{sec:async-scheme-discrete}. Thus, the instantaneous rate of successful correction jumps is $1$ at all local minima of the update surface determined by the clocks $\kappa$, and otherwise it is $0$. (For intuition, we remark that the roughening dynamics of the update surface falls under the KPZ universality class and is known to have an expected density of local minima that remains bounded away from zero even in the limit of infinite system size~\cite{korniss2011}.)
In our case, the synchronization neighborhoods are of finite size $\bigO(M^+)$ and all parameters are fixed constants, so a crude bound follows directly by treating the entire neighborhood as a finite-state continuous Markov chain. In every state, completion of the next required brickwork step is reached through a path of at most $\bigO(M^+)$ local updates, which are each attempted at unit rate. Hence there exists a constant $v_* > 0$, determined by $M^+$ and independent of the total system size, such that the next required update is always completed with probability at least $ v_*$ per unit time.
A local noise process with rate $p$ thus produces a fault before the next successful correction step with probability at most $\bigO(p / v_*) = \bigO(p)$.

In the local noise model of Sec.~\ref{sec:conds-selfcorrecting-QCA} this corresponds to an effective discrete fault strength $\eta = \bigO(\sqrt{p})$ per successful correction step.
We infer that noisy continuous evolution under this Lindbladian maps to discrete asynchronous steps of $\widetilde{\R}$ with constant local noise strength. 
Clock errors are already included in the damage set, and outside the damage each synchronization neighborhood implements the correct higher-level update. Therefore the same exRec accounting as in Sec.~\ref{sec:conds-selfcorrecting-QCA} gives $\eta_{l+1} \leq A_{\text{cntns}} \eta_l^{(\tbad + 1) / R}$ with a modified constant $A_{\text{cntns}}$. Hence Thm.~\ref{thm:self-correction} extends to the continuous time evolution under this Lindbladian, so, below a nonzero noise threshold, logical errors are suppressed arbitrarily with increasing system size.

\end{document}